\def\squeezed{0} 
\def\shownotes{1}

\documentclass[11pt]{article}
\usepackage{amsfonts,amsmath,amssymb,color,url,fullpage}

\providecommand{\remove}[1]{}

\providecommand{\ie}{\emph{i.e.}{} }
\providecommand{\eg}{\emph{e.g.}{} }

\providecommand{\Exp}{\mathbb{E}}

\providecommand{\dans}{\rightarrow}

\providecommand{\eps}{\varepsilon}
\providecommand{\zo}{\{0, 1\}}

\renewcommand{\Pr}{\mathop{\mathrm{Pr}}}

\makeatletter
\@addtoreset{equation}{section}
\makeatother

\usepackage[numbers]{natbib} 
\usepackage[latin1]{inputenc}
\usepackage{times}
\usepackage[hylinks,final,titlepage,full,nofullpage,usetoc]{tcs}
\usepackage{hyperref}

\overfullrule=5pt

\ifnum\shownotes=1
\newcommand{\Dnote}[1]{{\authnote{Dave}{#1}}}
\newcommand{\Snote}[1]{{\authnote{Salil}{#1}}}
\newcommand{\Knote}[1]{{\authnote{Kobbi}{#1}}}
\else
\newcommand{\Dnote}[1]{}
\newcommand{\Snote}[1]{}
\newcommand{\Knote}[1]{}
\fi

\ifnum\squeezed=1
\usepackage{sectsty}
\usepackage[
	hmargin=0.95in,vmargin=1.0in]{geometry}
\usepackage[compact]{titlesec}
\titlespacing{\section}{0pt}{4pt}{4pt}
\titlespacing{\subsection}{0pt}{4pt}{4pt}
\titlespacing{\subsubsection}{0pt}{4pt}{4pt}
\fi

\usepackage{mdwlist}

\newcommand{\Geom}{\mathrm{Geom}}
\newcommand{\Pay}{\mathsf{Pay}}
\newcommand{\Loss}{\mathsf{Loss}}

\newcommand{\Mout}{M_{\mathsf{out}}}
\newcommand{\Mpay}{M_{\mathsf{pay}}}

\title{Redrawing the Boundaries on Purchasing Data from Privacy-Sensitive Individuals}

\author{Kobbi Nissim\thanks{Ben-Gurion University and Harvard
    University. Work done while K.N. was visiting Harvard's Center for
    Research on Computation and Society (CRCS). } \and Salil
  Vadhan\thanks{Center for Research on Computation \& Society and
    School of Engineering \& Applied Sciences, Harvard University,
    Cambridge, MA.  E-mail: \texttt{salil@seas.harvard.edu}.}\and
  David Xiao\thanks{CNRS, Universit\'{e} Paris 7. E-mail: 
    \texttt{dxiao@liafa.univ-paris-diderot.fr}. Part of this work done
    while D.X. was visiting Harvard's Center for 
    Research on Computation and Society (CRCS).}}


\begin{document}

\maketitle

\thispagestyle{empty}
\setcounter{page}{0}

\begin{abstract}

We prove new positive and negative results concerning the existence of truthful and individually rational mechanisms for purchasing private data from individuals with unbounded and sensitive privacy preferences.  
We strengthen the impossibility results of Ghosh and Roth (EC 2011) by extending it to a much wider class of privacy valuations. In particular, these include privacy valuations that are based on $(\eps, \delta)$-differentially private mechanisms for
non-zero $\delta$, ones where the privacy costs are measured in a per-database manner (rather than taking the worst case), and ones that do not depend on the payments made to players (which might not be observable to an adversary).  

\Dnote{changed following sentence} To bypass this impossibility
result, we study a natural special setting where 
individuals have \emph{monotonic privacy valuations}, which captures
common contexts where certain values for private data are expected to
lead to higher valuations for privacy (\eg having a particular
disease).  We give new mechanisms that are individually rational for
all players with monotonic privacy valuations, truthful for all
players whose privacy valuations are not too large, and accurate if
there are not too many players with too-large privacy valuations.  We
also prove matching lower bounds showing that in some respects our
mechanism cannot be improved significantly.  
\Dnote{modified last sentence.  maybe we should remove it altogether?} \Knote{commented last sentence.}
\end{abstract} 

\vfill
\textbf{Keywords:} differential privacy, mechanism design
\Snote{anything else?} \Knote{consent elicitation?}

\newpage

\section{Introduction}

Computing over individuals' private data is extremely useful for various purposes, such as medical or demographic studies.  
Recent work on {\em differential privacy}~\cite{DMNS06, Dwo06} has focused on ensuring that analyses using private data can be carried out accurately while providing individuals a strong quantitative guarantee of privacy. 

While differential privacy provides formal guarantees on how much
information is leaked about an individual's data, it is silent about
what incentivizes the individuals to share their data in the first place.  A recent line of work \cite{MT07, GR11, NST12, X13, NOS12, CCKMV13, FL12, LR12, RS12} has begun exploring this question, by relating differential privacy to questions of mechanism design.

One way to incentivize individuals to consent to the usage of their private data is simply to pay them for using it.  \Knote{added "to consent ..."}
For example, a medical study may compensate its participants for the use of their medical data.  However, determining the correct price is challenging: low payments may not draw enough
participants, causing insufficient data for an accurate study, while high payments may be impossible for budgetary reasons.

Ghosh and Roth \cite{GR11} approached this problem by allowing the
mechanism to elicit \emph{privacy valuations} from individuals.  A
privacy valuation is a description of how much disutility an
individual experiences from having information about their private data revealed.  By
eliciting valuations, the mechanism is hopefully able to tailor
payments to incentivize enough participants to produce an accurate
result, while not paying too much.

\subsection{The setting and previous work}

We continue the study of purchasing private data from individuals as
first proposed by Ghosh and Roth \cite{GR11} (see \cite{R12,PR13} for a
survey of this area).  Since we work in a game-theoretic framework, we
will also call individuals ``players''.  As in \cite{GR11}, we
study the simple case where the private information consists of a
single data bit, which players can refuse to provide but cannot modify (e.g. because the data is already
certified in a trusted database, such as a medical record database).

To determine the price to pay players for their data bits, the
mechanism elicits \emph{privacy valuations} from them. We study the
simple case where each player $i$'s privacy valuation is parameterized
by a single real parameter $v_i$.  For example, in Ghosh and Roth
\cite{GR11} they assume that player $i$ loses $v_i \eps$ utility when
their data bit is used in an $\eps$-differentially private mechanism.
We will study a wider variety of privacy valuation functions in this
paper.  The valuations are known only to the players themselves, and
therefore players may report false valuations if it increases their
utility.  Furthermore, because these valuations may be correlated with
the data bits, the players may wish to keep their valuations private
as well.  It is instructive to keep in mind the application of paying
for access to medical data (\eg HIV status), where players cannot
control the actual data bit, but their valuation might be strongly
correlated to their data bit.

The goal of the mechanism is to approximate the sum of data bits while
not paying too much.  Based on the declared valuations, the mechanism
computes payments to each of the players and obtains access to the
purchased data bits from the players that accept the payments.
The mechanism then computes and publishes an approximation to the sum of the data bits, which can cause the players some loss of privacy, which should be
compensated for by the mechanism's payment.

The mechanism designer aims to achieve three goals, standard in the
game theory literature: the mechanism should be individually rational,
truthful, and accurate.
A
mechanism is \emph{individually rational} if all players receive
non-negative utility from participating in the game.  In our context, this means
that the mechanism is sufficiently compensating players for their loss in privacy, something
that may be important for ethical reasons, beyond just incentivizing
participation. 
Informally, a mechanism is \emph{truthful} for player $i$
on a tuple $x=(x_1,\ldots,x_n)$ of reports from the players
if player $i$ does not gain in utility by declaring
some false type $x'_i$ (while the other players' types remain
unchanged).  We aim to build mechanisms that are individually rational for all
players, and truthful for as many
players and inputs as possible (ideally for all players and inputs).   A mechanism is
{\em accurate} if the output of the mechanism is close to the true function
it wishes to compute, in our case the sum of the data bits.

Ghosh and Roth \cite{GR11} study the restricted setting (in their
terminology the ``insensitive value model'') where players do not care
about leaking their privacy valuations, as well as the general model
(the ``sensitive value model'') where they may care and their valuations can be unbounded. They present two
mechanisms in the insensitive value model, one that optimizes accuracy
given a fixed budget and another that optimizes budget given a fixed
accuracy constraint.  
They also prove that their
mechanisms are individually rational and truthful under the assumption
that each player $i$ experiences a disutility of \emph{exactly} $v_i
\eps$ when his data bit is used in an $\eps$-differentially private
mechanism.

In the general sensitive value model, they prove the following
impossibility result: there is no individually rational
mechanism with finite payments that can distinguish between
the case where all players have data bit $0$ and the case where all
players have data bit $1$. 

This impossibility result spurred a line of work attempting to bypass
it.  Fleischer and Lyu \cite{FL12} propose a Bayesian setting, where
(for simplicity considering just Boolean inputs) there are
publically known distributions $D_0$ and $D_1$ over privacy valuations, and each player
who has data bit $b_i$ receives a valuation $v_i$ drawn from $D_{b_i}$.  They show
that in this model, it is possible to build a Bayes-Nash truthful,
individually rational, and accurate mechanism.

In a related work, Roth and Schoenebeck \cite{RS12} study a Bayesian
setting where the agents' actual (dis)utilities are drawn from a known
prior, and construct individually rational and ex-post truthful mechanism that are optimal for
minimizing variance given a fixed budget and minimizing expected cost
given a fixed variance goal. In comparison to \cite{FL12}, \cite{RS12}
studies a disutility value that does not quantitively relate to the privacy properties
of the mechanism (but rather just a fixed, per-player disutility for participation), while it results in mechanisms satisfying a
stronger notion of truthfulness.

Ligett and Roth \cite{LR12}
measure the privacy loss incurred
from  a
player's decision to participate
separately from the information leaked
about the
actual data (effectively ruling out arbitrary correlations between privacy valuations and data bits).
They work in a worst-case (non-Bayesian) model and construct a
mechanism that satisfies a relaxed ``one-sided'' notion of
truthfulness and accuracy.  However, their mechanism only satisfies
individual rationality for players whose privacy valuation is not too
high.

\subsubsection{Improving the negative results}

\label{sec:issues}

This line of work leaves several interesting questions open.  The
first is whether the impossibility result of \cite{GR11} really closes the
door on all meaningful mechanisms when players can have unbounded privacy valuations that can be arbitrarily correlated with their sensitive data.

There are two
important loopholes that the 
result 
leaves open.  First, their notion of privacy loss is pure
$\eps$-differential privacy, and they crucially use the fact that for
pure $\eps$-differentially private mechanisms the support of the
output distribution must be identical for all inputs.  This prevents
their result from ruling out notions of privacy loss based on more
relaxed notions of privacy, such as $(\eps, \delta)$-differential
privacy for $\delta > 0$.  As a number of examples in the differential
privacy literature show, relaxing to $(\eps, \delta)$-differential privacy can be
extremely powerful, even when $\delta$ is negligibly small but
non-zero \cite{DworkLe09,HardtTa10,DworkRoVa10,De12,BeimelNiSt13}.
Furthermore, even $(\eps,\delta)$ differential privacy measures the worst-case privacy
loss over all databases, and it may be the case that on most databases, the players' expected privacy loss is much less than the worst case bound.\Knote{modified last sentence to expected loss}%
\footnote{For example, consider a mechanism that computes an $\eps$-differentially private noisy sum of the first $n-1$ rows (which we assume are bits), and if the result is 0, also outputs a $\eps$-differentially private noisy version of the $n$'th row (e.g. via ``randomized response''). 
The worst case privacy loss for player $n$ is $\eps$. 
On databases of the form $(0,0,\ldots,0,b)$ the first computation results with 0 with probability $\approx \eps$ and player $n$ suffers $\eps$ privacy loss with this probability. However, if it is known that the database is very unlikely to be almost entirely zero, then player $n$ may experience any privacy loss with only  exponentially small probability. \label{footnote:perdatabase} \Knote{modified to make more precise. original footnote in file.}}
Thus it is more realistic to use per-database measure of
privacy loss (as done in \cite{CCKMV13}).  

Second, the \cite{GR11} notion of privacy includes as observable and hence potentially disclosive output the (sum
of the) payments made to \emph{all} the players, not just the
sum of the data bits. 
This leaves open the possibility of constructing mechanisms for the setting where an outside observer is
not able to to see some of the player's payments.
For example, it may be natural to assume that, when trying to
learn about player $i$, an observer learns the payments to all players
\emph{except} player $i$.  In the extreme case, we could even restrict
the outside observer to not see any of the payments, but only the approximation to the sum of
the data bits.  The Ghosh-Rosh impossibility proof fails in these
cases. Indeed in this case where player $i$'s own payment is not
visible to the observer, there \emph{does exist} an individually
rational and accurate mechanism with finite payments: simply ask each
player for their valuation $v_i$ and pay them $v_i \eps$, then output
the sum of all the bits with noise of magnitude $O(1/\eps)$.  (The reason
that this mechanism is unsatisfactory is that it is completely untruthful ---
players always gain by reporting a higher valuations.) 

We will close both these gaps: our results will hold even under very
mild conditions on how the players experience privacy loss (in
particular capturing a per-database analogue of $(\eps, \delta)$-differential privacy), and
even when \emph{only} the approximate count of data bits is observable and
\emph{none} of the payments are observable.

\subsubsection{Improving the positive results}

Another question left open by the previous work is whether we can achieve individual
rationality and some form of truthfulness under a worst-case setting.  Recall that \cite{FL12} and
\cite{RS12} work in a Bayesian model, while \cite{LR12} does not guarantee
individual rationality for all players.  Furthermore, in both \cite{FL12} and \cite{RS12}
the priors are heavily used in \emph{designing the mechanism}, and
therefore their results break if the mechanism designer does not
accurately know the priors.
We will replace the Bayesian assumption with a simple
qualitative assumption on the monotonicity of the correlation between players' data bits
and their privacy valuation.  For accuracy (but not individual rationality), we will assume
a rough bound on how many players exceed a given threshold in their privacy
valuations (similarly to \cite{NOS12}).

Another question is the interpretation of the privacy loss functions.
We observe that the truthfulness of the mechanisms in \cite{GR11}
crucially relies on the assumption that $v_i \eps$ is the \emph{exact}
privacy loss incurred.  As was argued by \cite{NOS12} and \cite{CCKMV13},
it seems hard to quantify the exact privacy loss a player experiences, as it may
depend on the mechanism, all of the players' inputs, as well as an adversary's auxiliary
information about the database.  (See \autoref{footnote:perdatabase} for an example.)
It is much more reasonable to assume that the privacy valuations $v_i$
declared by the players and the differential privacy parameter $\eps$ yield an
\emph{upper bound} on their privacy
loss.  When using this interpretation, the truthfulness of \cite{GR11}
no longer holds.  The mechanisms we construct will remain truthful
using the privacy loss function only as an upper
bound on privacy loss (for players whose privacy valuations are not too large,
similarly to the truthfulness guarantees of \cite{NOS12,CCKMV13,LR12}).

\subsection{Our results}
\label{sec:results}

In our model there are $n$ players labelled $1, \ldots, n$ each with a
data bit $b_i \in \zo$ and a privacy valuation $v_i \in \R$, which we describe as
a $2n$-tuple $(b,v)\in \zo^n\times \R^n$.  The
mechanism designer is interested in learning (an approximation of) $\sum b_i$.
The players may lie about their valuation but they cannot lie about
their data bit.  A mechanism $M$ is a pair of randomized functions
$(\Mout, \Mpay)$, where $\Mout : \zo^n \times \R^n \dans \Z$ and
$\Mpay : \zo^n \times \R^n \dans \R^n$.  Namely $\Mout$ produces an
integer that should approximate $\sum b_i$ while $\Mpay$ produces
payments to each of the $n$ players.

Because the players are privacy-aware, the utility they derive from
the game can be separated into two parts as follows:
$$\textrm{utility}_i = \textrm{payment}_i - \textrm{privacy loss}_i.$$ 
(Note that in this paper, we assume the players have no (dis)interest
in the integer that $\Mout$ produces.)  The privacy loss term will be
quantified by a \emph{privacy loss function} that depends on the identity of the player, his bit, his privacy valuation, and his declared valuation $i, b, v, v'_i$ (where $v'_i$ is not necessarily his
true type $v_i$), the mechanism $M$, and the outcome $(s, p)$ produced by $(\Mout,\Mpay)$. \Knote{made verbose.}

\paragraph{Strengthened impossibility result of non-trivial accuracy
  with privacy.}

Our first result significantly strengthens the impossibility result of
Ghosh-Roth \cite{GR11}.

\begin{theorem}[Main impossibility result, informal.  See \autoref{thm:imp}]
  \label{thm:impinf}
  Fix any mechanism $M$ and any reasonable privacy loss functions.
  Then if $M$ is truthful (even if only for players with privacy valuation 0) and
  individually rational and makes finite payments to the players (even if only when
  all players have privacy valuation 0), then
  $M$ cannot distinguish between inputs $(b,v)=(0^n, 0^n)$ and $(b',v)=(1^n, 0^n)$.
\end{theorem}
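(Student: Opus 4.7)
The plan is a proof by contradiction. Suppose the output distributions of $M$ on $(b,v')=(0^n,0^n)$ and $(1^n,0^n)$ differ by at least some $\eta>0$ in total variation. A standard hybrid argument flipping bits one at a time yields an index $i^*\in[n]$ and a partial bit vector $b_{-i^*}\in\zo^{n-1}$ such that the distribution of $\Mout(b_{i^*}{=}0,b_{-i^*},0^n)$ differs from that of $\Mout(b_{i^*}{=}1,b_{-i^*},0^n)$ by at least $\eta/n$; informally, on the all-zero reports $M$ genuinely depends on player $i^*$'s bit.

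Next I would derive a uniform upper bound on $\Exp[p_{i^*}]$ that is independent of player $i^*$'s report. The finite-payments hypothesis at valuations $0^n$ provides a total budget $P$ with $\sum_i \Exp[p_i\mid b,0^n]\leq P$. Individual rationality at $v=0^n$ gives $\Exp[p_i\mid b,0^n]\geq 0$ (any reasonable privacy loss vanishes when $v_i=0$), so $\Exp[p_{i^*}\mid b,0^n]\leq P$ for every $b$. Now I would invoke truthfulness for a player with $v_{i^*}=0$: her privacy loss is zero, so her utility equals her expected payment, and truthfulness forbids any false report from strictly raising that payment. Hence $\Exp[p_{i^*}\mid b,(v,0^{n-1})]\leq \Exp[p_{i^*}\mid b,0^n]\leq P$ for every $v\in\R$ and every $b\in\zo^n$; the payment bound transfers from the all-zero reports to any unilateral deviation of player $i^*$.

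Finally, I would consider the true profile in which player $i^*$ has valuation $v$, to be taken arbitrarily large, and every other player has valuation $0$. Truth-telling then produces reports $(v,0^{n-1})$, and individual rationality at this truthful profile forces $\Exp[p_{i^*}\mid b,(v,0^{n-1})]\geq \Exp[L_{i^*}]$, so $\Exp[L_{i^*}]\leq P$ by the previous paragraph. The desired contradiction should follow from reasonableness of the privacy loss function: a player whose bit is distinguishable by $M$ should suffer expected privacy loss $\Omega(v)$ as $v\to\infty$, so taking $v$ sufficiently large beats any fixed $P$.

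The principal obstacle is precisely this last step. The distinguishability we obtained holds at the reports $v'=0^n$, whereas the IR constraint binds at the reports $(v,0^{n-1})$. Bridging this gap is where the specific form of ``reasonable'' privacy loss must enter---for instance, via a monotonicity or worst-case-over-reports property ensuring that a mechanism that leaks about $b_{i^*}$ at some reports also leaks about $b_{i^*}$ at nearby reports, or by exploiting that in the sensitive-valuation regime the player's own declaration is itself an observable leakage channel, so that $M$ cannot hide $b_{i^*}$ at reports $(v,0^{n-1})$ without also hiding it at $v'=0^n$. Structurally, this step plays the role filled in Ghosh--Roth by the support-matching consequence of pure $\eps$-differential privacy.
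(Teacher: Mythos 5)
You have correctly diagnosed the weak point of your own plan, and it is in fact a fatal gap rather than a surmountable technicality. Your hybrid argument establishes that some player $i^*$'s data bit influences $\Mout$ at the report profile $v' = 0^n$, so the input $(b_{-i^*}, b_{i^*}{=}1, 0^n)$ is $\eta/n$-distinguishable for player $i^*$. But the ``increasing for $\delta$-distinguishability'' property (\autoref{def:increasing}) only converts distinguishability of an input $(b,v)$ into a large loss when player $i$'s \emph{true valuation $v_i$} at that input is large; here $v_{i^*}=0$, so nothing follows. When you then raise the true valuation to $(v, 0^{n-1})$, you are at a different input, and there is no reason that input should still be distinguishable. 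Indeed, the paper's own \autoref{alg:mon} is a mechanism that deliberately stops depending on $b_{i^*}$ exactly when $v_{i^*}$ gets large --- so ``leaks at $v'=0^n$ implies leaks at $(v,0^{n-1})$'' cannot be added as a ``reasonable'' assumption without trivializing the class of mechanisms and invalidating the paper's positive results.

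The paper sidesteps this entirely by choosing the hybrids so that the input whose distinguishability is being interrogated already has the player's valuation set high. Concretely, set $P=\max_{i,b}\Pay_i(b,0^n)$ (finite) and $L=\max_{i,b}T_i(P,b,0^{n-1})$. Define $2n+1$ hybrids where the step from $x^{(i,0)}$ to $x^{(i,1)}$ flips player $i$'s bit $0\to 1$ \emph{and simultaneously} raises her valuation $0\to L$, and the step from $x^{(i,1)}$ to $x^{(i+1,0)}$ lowers her valuation back $L\to 0$. Then the argument is not ``show leakage, then derive a contradiction,'' but a dichotomy: either $x^{(i,1)}$ is $\delta$-distinguishable for player $i$, in which case her loss exceeds $P$ while her payment is at most $P$ (the payment bound uses truthfulness for player $i$ with $v_i=0$ at $x^{(i+1,0)}$, via ``respects indifference,'' exactly as in your step 4), violating IR; or $x^{(i,1)}$ is not $\delta$-distinguishable, in which case both $\Delta(\Mout(x^{(i,0)}),\Mout(x^{(i,1)}))<\delta$ and $\Delta(\Mout(x^{(i,1)}),\Mout(x^{(i+1,0)}))<\delta$ since $x^{(i,1)}$ is an $i$-neighbor of each. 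Summing over $i$ yields $\Delta(\Mout(0^n,0^n),\Mout(1^n,0^n))<2n\delta\le 1/3$, contradicting accuracy. The moral is that you should not try to extract an $i^*$ where the mechanism provably leaks; rather, build hybrids so that \emph{non-leakage} at every step is what you need, and that is precisely what IR plus the payment bound forces. One more small note: the paper's ``respects indifference'' hypothesis only says the loss is independent of the declaration when $v_i=0$, not that it vanishes; your reasoning should be adjusted accordingly, though the conclusion (payment cannot rise under a false report when $v_i=0$) is the same.
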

\Snote{edited parenthetical ``even ifs''}

\Snote{modified next few sentences}
By ``reasonable privacy loss functions,'' we mean that
if from observing the output of the mechanism on an input $(b,v)$, an adversary
can distinguish the case that player $i$ has data bit $b_i=0$ from data bit $b_i=1$ (while keeping
all other inputs the same), then player $i$ experiences a significant privacy loss (proportional to $v_i$) on
database $(b,v)$.
In particular, we allow for a per-database notion of privacy loss.  Moreover,
we only need the adversary to be able to observe the mechanism's estimate of the count $\sum_j b_j$, and not
any of the payments made to players.  And our notion of indistinguishability captures
not
only pure $\eps$-differential privacy but also $(\eps,
\delta)$-differential privacy for $\delta > 0$.
The conclusion of the
result is as strong as conceivably possible, stating that $M$ cannot
distinguish between the two most different inputs (data bits all $0$
vs. data bits all $1$) even in the case where none of the players care
about privacy.

We also remark that in our main impossibility result, in order to
handle privacy loss functions that depend only on the
distribution of the observable count and not the payment information,
we crucially use the requirement that $M$ be truthful for players with
$0$ privacy valuation. As we remarked earlier in
\autoref{sec:issues} there exist $M$ that are individually rational
and accurate (but not truthful).

\paragraph{New notions of privacy and positive results.}

\Dnote{edited this section}
One of the main conceptual contributions of this work is restricting
our attention to a special class of privacy loss functions, which we
use to bypass our main impossibility 
result.  Essential to the definition of differential privacy
(\autoref{def:dp}) is the notion of \emph{neighboring inputs}.  Two
inputs to the mechanism are considered neighboring if they differ only in the
information of a single player, and in the usual notion of
differential privacy, one player's information may differ arbitrarily.
%
This view also characterized how previous work modeled privacy loss
functions: in the sensitive value model of \cite{GR11}, the
privacy loss function to a player $i$ on an input $(b_i, v_i)$ was
computed by considering how much changing to any possible neighbor
$(b'_i, v'_i)$ would affect the output of the mechanism.  In contrast,
we will restrict our attention to privacy loss functions that consider
only how much changing to a specific subset of possible neighbors
$(b'_i, v'_i)$ would affect the output of the mechanism.  By
restricting to such privacy loss functions, we can bypass our
impossibility results.

We now describe how we restrict $(b'_i, v'_i)$.  Recall that in our setting a
single player's type information is a pair $(b_i,v_i)$ where $b_i\in
\zo$ is a data bit and $v_i\in \R$ is a value for privacy.  We observe
that in many cases there is a natural sensitive value of the bit
$b_i$, for example, if $b_i$ represents HIV status, then we would
expect that $b_i = 1$ is more sensitive than $b_i = 0$.

Therefore we consider only the following \emph{monotonic valuations}:
$(0, v_i)$ is a neighbor of $(1, v_i')$ iff $v_i \leq v_i'$.  Thus, if
a player's true type is $(1, v'_i)$, then he is only concerned with how
much the output of the mechanism differs from the case that her actual type were $(0, v_i)$ for $v_i \leq v'_i$.

Consider the pairs that we have excluded: any pairs $(b_i, v_i), (b_i, v'_i)$
(\ie the data bit does not change) and any pairs $(0, v_i), (1, v'_i)$
where $v_i > v'_i$.  By excluding these pairs we formally capture the idea
that players are not concerned about revealing their privacy
valuations \emph{except} inasmuch as they may be correlated with their
data bits $b_i$ and therefore may reveal something about $b_i$.  Since $b_i
= 1$ is more sensitive than $b_i = 0$, the correlation says that privacy
valuation when $b_i = 1$ should be larger than when $b_i = 0$.  This can be seen as an intermediate notion between a model where
players do not care at all about leaking their privacy valuation (the
insensitive value model of \cite{GR11}), and a model where players care
about leaking any and all information about their privacy valuation
(the sensitive value model of \cite{GR11}).

Of course the assumption that players are not concerned about
revealing their privacy valuation except inasmuch as it is correlated
with their data is highly context-dependent.  There may settings where
the privacy valuation is intrinsically sensitive, independently of the
players' data bits, and in these cases using our notion of monotonic
valuations would be inappropriate.  However, we believe that there are
many settings where our relaxation is reasonable.

By using this relaxed notion of privacy, we are able to bypass our
main impossibility result and prove the following:
\begin{theorem}[Main positive result, informal, see \autoref{thm:mon}]
  For any fixed budget $B$ and $\eps > 0$, for privacy loss functions
  that only depend on how the output distribution changes between
  monotonic valuations, there exists a mechanism $M$ that is
  individually rational for all players and truthful for players with low
  privacy valuation (specifically $v_i \leq B/2\eps n$).  Furthermore, as long as the
  players with low privacy valuation do indeed behave truthfully, then
  regardless of the behavior of the players with high privacy
  valuation, the mechanism's output estimates the sum $\sum_i b_i$
  to within $\pm (h + O(1/\eps))$ \Dnote{added parens} where
  $h$ is the number of players with high privacy valuation.
\end{theorem}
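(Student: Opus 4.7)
My plan is to construct a simple threshold mechanism. Set a cutoff $\tau = B/(2\eps n)$; after collecting declarations $v'_1,\ldots,v'_n$, let $S = \{i : v'_i \leq \tau\}$, compute $\Mout = \sum_{i \in S} b_i + Z$ where $Z$ is symmetric geometric noise of parameter $\eps$, and set $\Mpay_i = \tau\eps$ for $i \in S$ and $0$ otherwise. The total payout is at most $n\tau\eps = B/2 \leq B$, so the budget is respected.

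To verify individual rationality I would split into two cases driven by the monotonic privacy analysis. For $i \in S$, flipping $b_i$ between monotonically compatible valuations shifts the count by at most one, so by the standard geometric-mechanism analysis the output is $e^{\eps}$-indistinguishable, and the privacy loss is at most $v_i \eps \leq \tau\eps$, which the flat payment $\tau\eps$ covers. For $i \notin S$, I would argue the privacy loss is exactly zero. The monotonic neighbors of $(1, v_i)$ with $v_i > \tau$ are precisely $(0, v_i^*)$ with $v_i^* \leq v_i$, and in either subcase (``$v_i^* > \tau$, still excluded'' or ``$v_i^* \leq \tau$, included with bit $0$'') the sum $\sum_{j \in S} b_j$ has the same distribution as in the real world, since including a $0$-bit contributes nothing. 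A symmetric argument handles $(0, v_i)$ with $v_i > \tau$, whose monotonic neighbors all have $v_i^* \geq v_i > \tau$ and so remain excluded. Hence excluded players experience no loss, and receiving $0$ is individually rational.

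For truthfulness of any player with true $v_i \leq \tau$, I would enumerate their strategies. Any declaration $v'_i \leq \tau$ yields the same membership in $S$, the same flat payment $\tau\eps$, and the same privacy-loss bound $v_i\eps$, so the player is indifferent among low reports. Declaring $v'_i > \tau$ drops them: conditional on $v'_i > \tau$ the mechanism's output is independent of $b_i$ (by the excluded-player analysis above), so the privacy loss is $0$, but the payment is also $0$, giving utility $0$---weakly dominated by the truthful utility $\tau\eps - v_i\eps \geq 0$.

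Finally, accuracy follows once low-valuation truthfulness is assumed: then every $i$ with $v_i \leq \tau$ lies in $S$, so the $\leq h$ players outside $S$ are all high-valuation ones, regardless of their strategy. Therefore $\sum_{i \in S} b_i$ differs from $\sum_i b_i$ by at most $h$, and a standard geometric-noise tail bound gives $|Z| = O(1/\eps)$ with high probability, yielding the claimed error $h + O(1/\eps)$. The step I expect to require the most care is the excluded-player privacy analysis: one must verify that swapping to a monotonic neighbor, \emph{even a low-valuation one who would be included in the alternative world}, preserves the output distribution. This hinges on the cancellation that including a $0$-bit player contributes the same to $\Mout$ as excluding them, which is precisely the asymmetry that monotonic valuations let us exploit and that the Ghosh--Roth impossibility result forbids in the fully sensitive setting.
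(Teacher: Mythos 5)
Your mechanism is essentially the paper's Algorithm 1 --- same threshold $\tau = B/(2\eps n)$, same policy of zeroing out the bit of any player who declares above $\tau$, same geometric noise, and the same individual-rationality and accuracy arguments. But you pay $\tau\eps = B/(2n)$ to each included player, whereas the paper pays $B/n = 2\tau\eps$, and that factor of two is where your truthfulness argument develops a gap.

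The gap is in the claim that a player with true $v_i \leq \tau$ who misreports $v_i' > \tau$ incurs privacy loss \emph{exactly zero}. That is not what the paper's ``bounded by differential privacy for monotonic valuations'' condition gives you. The bound on $|\lambda_i^{(M)}(b, v, v_i', s, p_{-i})|$ in that definition compares $M_{-i}$ on the \emph{true} input $(b,v)$ against $M_{-i}$ on a monotonic neighbor of $(b,v)$; it does not depend on the declaration $v_i'$ at all. Since the true $v_i \leq \tau$, the mechanism on $(b,v)$ still incorporates $b_i$, so the log-ratio can be as large as $\eps$, and the resulting bound is $|\Loss_i| \leq v_i \eps$, not $0$ (this is exactly Fact 4.3, which applies ``regardless of \ldots\ player declarations''). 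Moreover the paper explicitly allows $\lambda$ --- hence $\Loss_i$ --- to be \emph{negative}. So all you can conclude is $\Loss_i(b,v,v_i) \in [-v_i\eps, v_i\eps]$ and $\Loss_i(b,v,v_i') \in [-v_i\eps, v_i\eps]$, meaning the privacy term can swing by as much as $2 v_i \eps$, which for $v_i$ near $\tau$ reaches $2\tau\eps = B/n$, strictly exceeding your payment loss of $\tau\eps = B/(2n)$. The paper's payment of $B/n$ is chosen precisely so that this worst-case $2v_i\eps$ swing can never outweigh the forfeited payment. Your intuition (``the observer only sees the mechanism run on declarations, so an excluded player leaks nothing'') is semantically appealing, but it is not what the formalization says, and the paper does not use it; switching to it would amount to a different definition of being bounded by differential privacy.

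The fix is immediate: pay $B/n$ instead of $\tau\eps$. Everything else you wrote matches the paper's proof, with one smaller nit: for the ``indifferent among low reports'' step you should explicitly invoke the ``respects identical output distributions'' property. The bounded-by-DP condition only yields a \emph{bound} $|\Loss_i| \leq v_i\eps$ for each low report, not equality across low reports; it is the identical-output-distributions property that lets you conclude the privacy loss (and hence utility) is literally the same for all declarations $v_i' \leq \tau$.
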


Note that even though we fix a budget $B$ beforehand and thus cannot make arbitrarily high payments,
we still achieve individual rationality for all players, even those with extremely high privacy valuations $v_i$.
We do so by ensuring that such players experience perfect privacy ($\eps_i=0$), assuming they have
monotonic valuations.
We
also remark that while we do not achieve truthfulness for all players,
this is not a significant problem as long as the number $h$ of
players with high privacy valuation is not too large.  This is because
the accuracy guarantee holds even if the non-truthful players lie about their valuations.  We also give a small improvement to
our mechanism that ensures truthfulness for all players with data bit $0$, but
at some additional practical inconvenience; we defer the details to
the body of the paper.

\Dnote{edited this paragraph} We remark that besides our specific
restriction to monotonic valuations in this paper, the underlying
principle of studying restricted notions of privacy loss functions by
considering only subsets of neighbors (where the subset should be
chosen appropriately based on the specific context) could turn out to
be a more generally meaningful and powerful technique that is useful
to bypass impossibility results elsewhere in the study of privacy.

\paragraph{Lower bounds on accuracy.}

The above positive result raises the question: can we adaptively
select the budget $B$ in order to achieve accuracy for all inputs,
even those where some players have arbitrarily high privacy
valuations?  Recall that \autoref{thm:impinf} does not
preclude this because we are now only looking at
monotonic valuations, whereas \autoref{thm:impinf} considers
arbitrary valuations.  We nevertheless show that it is impossible:
\begin{theorem}[Impossibility of accuracy for all privacy valuations,
  informal, see \autoref{thm:monimp}]
  For reasonable privacy loss functions that are only sensitive to
  changes in output distribution of monotonic neighbors, any $M$ with
  finite payments that is truthful (even if only on players with $0$
  privacy valuation) and individually rational, there exist player
  privacy valuations $v, v'$ such that $M$ cannot distinguish between
  $(0^n, v)$ and $(1^n, v')$.
\end{theorem}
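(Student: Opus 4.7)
The plan is to exhibit specific valuations $v = v' = V \cdot 1^n$ for a sufficiently large $V$ and carry out a hybrid argument analogous to the proof of \autoref{thm:impinf}, but using only transitions that respect the monotonic-neighbor restriction on the privacy loss function.

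Define the hybrid $H_k = (1^k 0^{n-k}, V \cdot 1^n)$ for $k = 0, 1, \ldots, n$, so $H_0 = (0^n, V \cdot 1^n)$ and $H_n = (1^n, V \cdot 1^n)$. Consecutive inputs $H_{k-1}$ and $H_k$ differ only in player $k$, whose type changes from $(0, V)$ to $(1, V)$; since $V \leq V$, these are precisely monotonic neighbors (unlike the hybrid in \autoref{thm:impinf}, which changes the bit with valuation $0$ on both sides and is therefore ruled out under the monotonic restriction).

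At $H_{k-1}$, player $k$'s true and declared type is $(0, V)$, and $H_k$ is the monotonic-neighbor input obtained by changing this type to $(1, V)$. For any reasonable monotonic-neighbor-sensitive privacy loss function, this implies that player $k$'s privacy loss at $H_{k-1}$ is at least $c \cdot V \cdot d_k$, where $d_k$ is a suitable distinguishability measure between $D(H_{k-1})$ and $D(H_k)$ (e.g.\ total variation on the observable count) and $c > 0$ is a constant intrinsic to the privacy loss form. Individual rationality then forces $M_k(H_{k-1}) \geq c V d_k$.

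The main obstacle is to upper-bound $M_k(H_{k-1})$ independently of $V$. I would invoke truthfulness for $0$-valuation players as follows: if player $k$'s true valuation were $0$, their privacy loss would vanish regardless of declaration, so truthfulness implies that their payment is weakly maximized at declared valuation $0$; thus $M_k(H_{k-1}) \leq M_k(H_{k-1}^{(k \to 0)})$, where $H_{k-1}^{(k \to 0)}$ replaces player $k$'s declared valuation by $0$. Iterating this reduction across the remaining players in an appropriate order (while carefully tracking how each declaration change can affect player $k$'s payment) and then invoking the finite-payments hypothesis at an all-$0$-valuation reference input yields a bound $M_k(H_{k-1}) \leq B$ for some $B$ independent of $V$. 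This propagation step is the technical heart of the argument: truthfulness directly constrains only a player's own payment as a function of their own declaration, so chaining the reductions while controlling side effects on $M_k$ requires care.

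Combining the two bounds gives $d_k \leq B/(cV)$, and by the triangle inequality $d(D(H_0), D(H_n)) \leq \sum_{k=1}^n d_k \leq nB/(cV)$. Choosing $V$ sufficiently large (for instance $V \gg nB/c$) makes this distance arbitrarily small, so $M$ cannot distinguish $(0^n, V \cdot 1^n)$ from $(1^n, V \cdot 1^n)$, and we take $v = v' = V \cdot 1^n$.
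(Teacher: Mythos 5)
Your overall hybrid structure — flipping bits one player at a time and showing each adjacent pair of inputs produces outputs that are close in total variation — is the right shape, and both you and the paper exploit the crucial observation that $(0,v_i)\to(1,v_i')$ with $v_i\le v_i'$ is a monotonic-neighbor transition. But there are two genuine gaps in your execution, and the paper's proof is engineered precisely to avoid both.

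First, your lower bound $\Loss_k(H_{k-1}) \ge c\,V\,d_k$ assumes a quantitative (linear-in-$v_i$) form of the privacy loss that the theorem does not hypothesize. The actual hypothesis (increasing for $\delta$-monotonic distinguishability) only promises a threshold function $T_k(\ell,b,v_{-k})$ above which the loss exceeds $\ell$ whenever the input is $\delta$-distinguishable. In your hybrids $v_{-k}=V\cdot 1^{n-1}$, so $T_k$ is being evaluated at an argument that itself grows with $V$; nothing forbids $T_k$ from growing faster than $V$, in which case raising $V$ never forces the loss above the relevant payment level. The paper sidesteps this by building the hybrids inductively: $L_i$ is defined as $T_i(P_i, b^{(i)}, v^{(i)}_{-i})$ evaluated at a \emph{fixed} input $x^{(i,1)}$ (whose coordinates other than $i$ are already pinned to $L_1,\dots,L_{i-1},0,\dots,0$), so the threshold is a concrete number and player $i$'s valuation is then set to exactly that number — no uniform $V$ is chosen up front.

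Second, and more fundamentally, your plan to bound $\Pay_k(H_{k-1})$ by a quantity independent of $V$ does not go through. The first reduction, $\Pay_k(H_{k-1}) \le \Pay_k(H_{k-1}^{(k\to 0)})$ via truthfulness for privacy-indifferent players and respect for indifference, is fine. But to reach an all-zero-valuation reference you would then have to change the other players' declarations from $V$ to $0$, and truthfulness for player $j\ne k$ says nothing about how that affects player $k$'s payment — it only constrains player $j$'s own payment. You flag this as "the technical heart," but it is not merely a detail to be handled carefully: there is no truthfulness-based mechanism to propagate the bound across players, so the chain breaks after one link. The paper's construction avoids ever needing such a chain. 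It applies truthfulness exactly once per player, comparing $x^{(i,1)}$ (player $i$ has bit $1$, valuation $0$, others fixed) with $x^{(i+1,0)}$ (same, but player $i$ declares $L_i$); the finite-payments hypothesis is invoked directly at the fixed input $x^{(i,1)}$ to define $P_i$, rather than at a single all-zero reference. As a consequence the paper's conclusion has $v=0^n$ and $v'=(L_1,\dots,L_n)$ — two \emph{different} valuation vectors, not the single uniform $V\cdot 1^n$ you chose — and this extra freedom is exactly what makes the argument close.
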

The exact formal condition on finite payments is somewhat
stronger here than in \autoref{thm:impinf}, but it remains reasonable;
we defer the formal statement to the body of the paper.

Finally, we also prove a trade-off showing that when there
is a limit on the maximum payment the mechanism makes, then accuracy
cannot be improved beyond a certain point, even when considering only
monotonic valuations.  We defer the statement of this result to
\autoref{sec:accurate}.  \Dnote{rewrote this paragraph}

\subsection{Related work}

The relationship between differential privacy and mechanism design was
first explored by \cite{MT07}.  Besides the already mentioned works,
this relationship was explored and extended in a series of works
\cite{NST12}, \cite{X13} (see also \cite{X11priv}), \cite{NOS12},
\cite{CCKMV13} (see also \cite{CCKMV11}), \cite{HK12}, \cite{KPRU12}.
In \cite{MT07,NST12,HK12,KPRU12}, truthfulness refers only to the utility
that players derive from the outcome of the game (as in standard mechanism design)
and differential privacy is treated as a separate property.
The papers
\cite{X13, NOS12, CCKMV13} study
whether and when such mechanisms, which are separately truthful and differentially private,
remain truthful even if the
players are privacy-aware and may incur some loss in utility from the
leakage of the private information.  Differential privacy has also
been used as a technical tool to solve problems that are not
necessarily immediately obvious as being privacy-related; the original
work of \cite{MT07} does this, by using differential privacy to construct
approximately truthful and optimal mechanisms, while more recently,
\cite{KPRU12} use differential privacy as a tool to compute approximate
equilibria.  For more details, we refer the reader to the recent
surveys of \cite{R12, PR13}.

Two ideas we draw on from this literature (particularly \cite{NOS12,CCKMV13})
are (1) the idea that
privacy loss cannot be used as a threat because we
do not know if a player will actually experience the maximal privacy
loss possible, and therefore we should treat privacy loss functions
only as upper bounds on the actual privacy loss, and (2) the idea that
it is meaningful to construct mechanisms that are truthful for players
with reasonable privacy valuations and accurate if most players satisfy this condition.
Our mechanisms are
truthful not for all players but only for players with low privacy
valuation; it will be accurate if the mechanism designer knows enough
about the population to set a budget such that most players have low
privacy valuation (with respect to the budget).

\section{Definitions}

\subsection{Notation}

For two distributions $X, Y$ we let $\Delta(X, Y)$ denote their total
variation distance (i.e. statistical distance).  For an integer $i$ let
$[i] = \{1, \ldots, i\}$.  For any set $S$ and any vector $v \in S^n$,
we let $v_{-i} \in S^{n-1}$ denote the vector $v_1, \ldots, v_{i-1},
v_{i+1}, \ldots, v_n$. We use the following convention: a vector of $n$
entries consisting of $n-1$ variables or constants followed by an
\emph{indexed} variable denotes the vector of $n$ entries with the
last variable inserted at its index. For example \eg $(0^{n-1}, v_i)$
denotes the vector with all zeros except at the $i$'th entry, which contains $v_i$. Some notation about the setting regarding mechanisms etc. was
already introduced in \autoref{sec:results}.

\subsection{Differential privacy}
\Knote{removed "monotonic valuations" from title}
\Snote{moved monotone definitions to positive results section}

\begin{definition}
  Two inputs $(b, v), (b', v') \in \zo^n \times \R^n $ are
  \emph{$i$-neighbors} if $b_j = b'_j$ and $v_j = v'_j$ for all $j
  \neq i$. They are {\em neighbors} if they are $i$-neighbors for some $i
  \in [n]$.
\end{definition}

\begin{definition}
  \label{def:dp}
  A randomized function $f$ is {\em $(\eps, \delta)$-differentially private}
  if for all neighbors $(b, v), (b', v')$, it
  holds that for all subsets $S$ of the range of $f$:
  \begin{equation}
    \label{eq:ahpobinpz}
    \Pr[f(b, v) \in S] \leq e^\eps \Pr[f(b', v') \in S] + \delta.
  \end{equation}
  We say $f$ is $\eps$-differentially private if it is $(\eps,
  0)$-differentially private.
\end{definition}

The symmetric geometric random
variable $\Geom(\eps)$ takes integer values with probability
mass function $\Pr_{x \getsr \Geom(\eps)}[x = k] \propto e^{-\eps
  |k|}$ for all $k \in \Z$.  It is well-known and easy to verify that
for $b \in \zo^n$, the output $\sum b_i + \Geom(\eps)$ is
$\eps$-differentially private.

\subsection{Privacy loss functions}

A {\em privacy loss function} for player $i$ is a real-valued function
$\lambda^{(M)}_i(b, v, v'_i, s, p_{-i})$ taking as inputs the vectors of all
player types $b, v$, player $i$'s declaration $v'_i$ (not necessarily
equal to $v_i$), and a possible outcome $(s, p_{-i}) \in \Z \times
\R^{n-1}$ of $M$. The function also depends on the mechanism $M =
(\Mout, \Mpay)$.  Finally we define
\begin{equation}
\label{eqn:loss}
\Loss^{(M)}_i(b, v, v'_i) =
\Exp_{(s,p) \getsr M(b, (v_{-i}, v'_i))} [\lambda^{(M)}_i(b, v, v'_i,
s, p_{-i})].
\end{equation}

Observe that we have excluded player $i$'s own payment from the
output, as we will assume that an outside observer cannot see player
$i$'s payment.  We let $M_{-i}$ denote the randomized function
$M_{-i}(b, v) = (\Mout(b, v), \Mpay(b, v)_{-i})$.

We comment that, in contrast to \cite{CCKMV13}, we allow $\lambda^{(M)}_i$ to depend on the player's
declaration $v'_i$ to model the possibility that a player's privacy
loss depends on his declaration.  \Snote{I don't follow the next
  sentence.  We don't let the observer see the actual declaration, as
  this would not respect identical output distributions.}
\Knote{suggest to remove next sentence and only say this can only
  strengthen pos results.} \Dnote{i removed the sentence in question}
Allowing this dependence only strengthens our positive
results, while our negative results hold even if we exclude this
dependence on $v'_i$.  
We remark that even if $\lambda^{(M)}_i$ doesn't depend on $v_i'$,  then $\Loss^{(M)}_i$ will still depend on
$v'_i$, since it is an expectation over the output
distribution of $\Mout(b, (v_{-i}, v'_i))$. (See \autoref{eqn:loss}).)

Since the choice of a specific privacy loss function depends heavily
on the context of the mechanism being studied, we avoid fixing a
single privacy loss function and rather study several reasonable
properties that privacy loss functions should have.  Also, while we
typically think of privacy valuation as being positive and privacy
losses as positive, our definition does not exclude the possibility
that players may \emph{want} to lose their privacy, and therefore we
allow privacy valuations (and losses) to be negative.  Our
impossibility results will only assume non-negative privacy loss,
while our constructions handle possibly negative privacy loss
functions as long as the \emph{absolute value} of the privacy loss
function is bounded appropriately.
\Snote{removed  ``thus making our results as strong
as possible in this regard.''}

\subsection{Mechanism design criteria}

\begin{definition}
  A mechanism $M = (\Mout, \Mpay)$ is {\em $([\alpha, \alpha'],
  \beta)$-accurate} on an input $(b, v)\in\zo^n\times\R^n$ if, setting $\overline{b} =
  \frac{1}{n} \sum_{i=1}^n b_i$, it holds that $$\Pr[\Mout(b, v) \notin
  ((\overline{b} - \alpha) n, (\overline{b} + \alpha') n) ] \leq \beta.$$
  We say that $M$ is {\em $(\alpha,\beta)$-accurate} on $(b,v)$ if it
  is $([\alpha,\alpha],\beta)$-accurate. 
\end{definition}
\Snote{changed closed interval to open interval inside probability above.  Else $([1/2,1/2],\beta)$ accuracy
doesn't make sense.  also added def of $(\alpha,\beta)$-accurate.}

We define $\Pay^{(M)}_i(b, v) = \Exp_{p \getsr \Mpay(b, v)}[p_i]$.

\begin{definition}
  Fix $n$, a mechanism $M$ on $n$ players, and privacy loss functions
  $\lambda^{(M)}_1, \ldots, \lambda^{(M)}_n$.  We say $M$ is
  \emph{individually rational} if for all inputs $(b, v) \in \zo^n
  \times \R^n$ and all $i \in [n]$:
  $$\Pay^{(M)}_i(b, v) \geq \Loss^{(M)}_i(b, v, v_i). $$
  $M$ is \emph{truthful for input $(b, v)$ and player $i$} if for
  all $v'_i$ it holds that
  $$\Pay^{(M)}_i(b, v) - \Loss^{(M)}_i(b, v, v_i) \quad
  \geq \quad \Pay^{(M)}(b, (v_{-i}, v'_i)) -
  \Loss^{(M)}_i(b, v, v'_i).$$ %
 $M$ is simply
  \emph{truthful} if it is truthful for all inputs and all players.
\end{definition}

\section{Impossibility of non-trivial accuracy with privacy}
\label{sec:neg}

\Snote{moved distinguishability definition here}
We will use a notion of
distinguishability that captures when a function leaks information
about an input pertaining to a particular player. \Knote{added qualifier}

\begin{definition}
  An input $(b, v)\in \zo^n\times\R^n$ is \emph{$\delta$-distinguishable}
  for player $i$
  with respect to a randomized function $f$ if there is an $i$-neighbor
  $(b', v')$ such that $\Delta(f(b, v),
  f(b', v')) \geq \delta$.
\end{definition}

We choose a notion based on statistical distance because it
allows us to capture $(\eps, \delta)$-differ\-ential privacy even for
$\delta > 0$.  Namely, if there is an input $(b, v)\in \zo^n\times \R^n$ that is $\delta$-distinguishable for
player $i$ with respect to $f$, then $f$ cannot be $(\eps,
\delta')$-differentially private for any $\eps,\delta'$ satisfying $\delta > \delta' + e^\eps -
1 \approx \delta' + \eps$.
However, note that, unlike differential privacy, $\delta$-distinguishability is a {\em per-input} notion, measuring how much privacy loss a player can experience on a particular input $(b,v)$, {\em not} taking the worst case over all inputs. \Snote{this is a new sentence} \Knote{added emphasis.}

For our impossibility result we will require that  any specified privacy loss should be attainable if the
player's privacy valuation is large enough, as long as there is in
fact a noticeable amount of information about the player's type being
leaked (\ie the player's input is somewhat distinguishable).
Note that having unbounded privacy losses is necessary
for having any kind of negative result. If the privacy losses were always upper-bounded by some value $L$, then
a trivially truthful and individually rational mechanism would simply pay every player $L$ and output the exact sum of data bits.
\Snote{new footnote in response to a reviewer comment}\Knote{made Salil's footnote part of text}

\begin{definition}
  \label{def:increasing}
  A privacy loss function $\lambda^{(M)}_i$ for a mechanism $M$ and
  player $i$ is \emph{increasing for $\delta$-distin\-guishability}
  if there exists a real-valued function $T_i$ such that
  for all $\ell > 0$, $b \in \zo^n$ and $v_{-i} \in \R^{n-1}$, if 
  $v_i \geq T_i(\ell, b, v_{-i})$ and if $(b, v)$ is
  $\delta$-distinguishable for player $i$ with respect to $\Mout$, then
  $\Loss^{(M)}_i(b, v, v_i) > \ell$.
\end{definition}

Notice that in our notion of increasing for
$\delta$-distinguishability we only consider distinguishability for
$\Mout$ and not for $(\Mout, \Mpay)$.  Being able to handle this
definition is what makes our impossibility rule out mechanisms even
for privacy loss functions depending only on the distribution of
$\Mout$.

\autoref{def:increasing} implies that the privacy loss functions are unbounded. 
We next define a natural property of loss functions, that for privacy-indifferent players privacy loss is not affected by the particular value reported for $v_i$.

\Snote{moved this def here, and changed terminology from ``centered'' to ``respects indifference''}

\begin{definition}
  A privacy loss function $\lambda^{(M)}_i$ for a mechanism $M$ and
  player $i$ \emph{respects indifference} if
  whenever $v_i = 0$ it follows that $\Loss^{(M)}_i(b, v, v'_i) =
  \Loss^{(M)}_i(b, v, v''_i)$ for all $v'_i, v''_i$.
\end{definition}

\Snote{changed ``finite payments for privacy-insensitive players'' to ``finite payments when all players are privacy-indifferent''}

\Dnote{moved definitions into theorem}

\begin{theorem}
  \label{thm:imp}
  Fix a mechanism $M$ and a number of players $n$, and non-negative
  privacy loss functions $\lambda^{(M)}_1, \ldots, \lambda^{(M)}_n$.
  Suppose that the $\lambda^{(M)}_i$ respect indifference, and are
  increasing for $\delta$-distin\-guishability for some $\delta \leq
  \tfrac{1}{6n}$.

  Suppose that $M$ that satisfies all of the following:
  \begin{itemize*}
  \item $M$ is individually rational.
  \item $M$ has finite payments when all players are
    privacy-indifferent, in the sense that for all $b \in \zo^n$ and
    all $i \in [n]$, it holds that $\Pay^{(M)}_i(b, 0^n)$ is finite.
  \item $M$ is truthful for privacy-indifferent players, namely $M$ is
    truthful for all inputs $(b, v)$ and players $i$ such that $v_i = 0$. 
  \end{itemize*}
  Then it follows that $M$ cannot have non-trivial accuracy in the
  sense that it cannot be $(1/2, 1/3)$-accurate on $(0^n, 0^n)$ and
  $(1^n, 0^n)$.
\end{theorem}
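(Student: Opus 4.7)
The plan is to derive a contradiction by locating a player $k$ and a family of inputs on which player $k$'s privacy loss can be driven arbitrarily high, conclude via individual rationality that player $k$'s payment on those inputs must also grow without bound, and finally contradict the finite-payments assumption via truthfulness for privacy-indifferent players.

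\textbf{Hybrid argument.} Assume for contradiction that $M$ is $(1/2,1/3)$-accurate on both $(0^n,0^n)$ and $(1^n,0^n)$. The two accuracy events force $\Mout$ into disjoint intervals, so $\Delta(\Mout(0^n,0^n),\Mout(1^n,0^n)) \geq 1/3$. Consider the hybrids $h_k = (1^k 0^{n-k}, 0^n)$ for $k = 0, \ldots, n$. By the triangle inequality, some $k$ satisfies $\Delta(\Mout(h_{k-1}), \Mout(h_k)) \geq 1/(3n)$. Write $b$ and $b'$ for the bit parts of $h_{k-1}$ and $h_k$, which are $k$-neighbors differing only in position $k$.

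\textbf{Distinguishability persists as we inflate $v_k$.} The key claim is that for \emph{every} $v^*_k \in \R$, the input $(b, (0^{n-1}, v^*_k))$ is $\delta$-distinguishable for player $k$ with respect to $\Mout$. Compare against two candidate $k$-neighbors. If $\Delta(\Mout(b, (0^{n-1}, v^*_k)), \Mout(b, 0^n)) \geq 1/(6n) \geq \delta$, then the bit-preserving valuation-altering neighbor $(b, 0^n)$ witnesses distinguishability. Otherwise, the triangle inequality gives $\Delta(\Mout(b, (0^{n-1}, v^*_k)), \Mout(b', 0^n)) \geq 1/(3n) - 1/(6n) = 1/(6n) \geq \delta$, and the bit-flipped neighbor $(b', 0^n)$ witnesses it.

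\textbf{Closing the loop.} By the increasing-for-$\delta$-distinguishability property of $\lambda^{(M)}_k$, choosing $v^*_k \geq T_k(\ell, b, 0^{n-1})$ forces $\Loss^{(M)}_k(b, (0^{n-1}, v^*_k), v^*_k) > \ell$ for any $\ell$ we like. Individual rationality then gives $\Pay^{(M)}_k(b, (0^{n-1}, v^*_k)) > \ell$. On the other hand, truthfulness for privacy-indifferent players applied to player $k$ on input $(b, 0^n)$ (with true $v_k = 0$, deviation $v^*_k$), combined with the respects-indifference property which equates $\Loss^{(M)}_k(b, 0^n, 0)$ and $\Loss^{(M)}_k(b, 0^n, v^*_k)$, yields
$$\Pay^{(M)}_k(b, 0^n) \geq \Pay^{(M)}_k(b, (0^{n-1}, v^*_k)) > \ell.$$
Since this holds for arbitrarily large $\ell$, $\Pay^{(M)}_k(b, 0^n)$ cannot be finite, contradicting the finite-payments assumption.

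\textbf{Main obstacle.} Step two is the conceptual heart of the argument: naively, inflating $v^*_k$ could collapse the two output distributions onto each other and destroy the distinguishability that came for free at $v_k = 0$. The two-candidate-neighbor trick sidesteps this: any attempt by the mechanism to hide its sensitivity to the bit by exploiting a large declared $v^*_k$ must itself make $\Mout$ sensitive to $v^*_k$, which works equally well for distinguishability since $k$-neighbors may differ both in bit and in valuation. The slack in the hybrid ($1/(3n)$ versus the required $\delta \leq 1/(6n)$) is exactly what permits this dichotomy.
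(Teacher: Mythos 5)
Your proof is correct and uses the same key ingredients as the paper's argument (hybrid sequence interpolating between $(0^n,0^n)$ and $(1^n,0^n)$, the combination of truthfulness for privacy-indifferent players with respects-indifference to transfer payment bounds across hybrids, and individual rationality plus increasing-for-$\delta$-distinguishability to bound privacy loss), but you route the contradiction in the opposite direction. The paper proceeds forward: it fixes $P$ and a single threshold $L$ up front from the finite-payments assumption, builds the full doubly-indexed hybrid chain $x^{(i,0)},x^{(i,1)}$ with valuations set to $L$ at the intermediate points, shows via IR that each $x^{(i,1)}$ cannot be $\delta$-distinguishable (hence each half-step has statistical distance $<\delta$), and concludes $\Delta(\Mout(0^n,0^n),\Mout(1^n,0^n))<2n\delta\leq 1/3$, contradicting accuracy. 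You instead assume accuracy, pigeonhole a far step $\Delta(\Mout(h_{k-1}),\Mout(h_k))\geq 1/(3n)$, and then use the two-candidate-neighbor dichotomy --- which is precisely the contrapositive of the paper's ``not $\delta$-distinguishable $\Rightarrow$ both half-steps small'' inference --- to show the inflated-valuation input stays $\delta$-distinguishable for every $v^*_k$, driving the payment $\Pay_k(b,0^n)$ to infinity and contradicting finite payments. Both are valid; the paper's version is a bit cleaner since it needs only a single fixed threshold $L$ and never invokes a limit $\ell\to\infty$, while yours isolates the one hybrid step where the action happens and makes the role of unboundedness in the loss functions more explicit.
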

\begin{proof}
  We write $\Pay_i, \Loss_i, \lambda_i$ to denote $\Pay^{(M)}_i,
  \Loss^{(M)}_i, \lambda^{(M)}_i$.  By the assumption that $M$ has
  finite payments when all players are privacy-indifferent, we can define 
	$$P = \max_{i \in [n], b \in \zo^n} \Pay_i(b, 0^n) < \infty.$$  By the
  assumption that all the $\lambda_i$ are increasing for
  $\delta$-indistin\-guishability, we may define a threshold $$L =
  \max_{i \in [n], b \in \zo^n} T_i(P, b, 0^{n-1})$$ such that for all
  $i \in [n], b \in \zo^n, v_i \geq L$, it holds that if $(b,
  (0^{n-1}, v_i))$ is $\delta$-distinguishable, then $\Loss_i(b,
  (0^{n-1}, v_i), v_i) > P$.

\Snote{elaborated description of hybrids}
 
We construct a sequence of $2n+1$ inputs 
  $x^{(1,0)}, x^{(1,1)}, x^{(2,0)}, x^{(2,1)}, \ldots, x^{(n,0)}, x^{(n,1)}, x^{(n+1,0)}$.
  In $x^{(1,0)}$, all players have data bit 0 and privacy valuation 0.  That is, $x^{(1,0)}=(0^n,0^n)$. From $x^{(i,0)}$, we construct $x^{(i,1)}$ by changing player $i$'s data bit $b_i$ from 0 to 1 and valuation $v_i$ from 0 to $L$.
  From $x^{(i,1)}$, we construct $x^{(i+1,0)}$ by changing player $i$'s valuation $v_i$ back from $L$ to $0$ (but $b_i$ remains 1).
  Thus, 
  \begin{eqnarray*}
	x^{(i,0)} & = &  ((1^{i-1},  0,  0^{n-i}), (0^{i-1}, 0, 0^{n-i})), \quad\textrm{and} \\
	x^{(i, 1)} & = & ((1^{i-1},  1,  0^{n-i}), (0^{i-1}, L, 0^{n-i}))
  \end{eqnarray*}
	In particular, $x^{(n+1,0)} = (1^n, 0^n)$. \Knote{tried to make hybrids visually clear}
  Define the hybrid distributions $H^{(i,j)} = \Mout(x^{(i,j)})$. 

\Snote{added claim and reordered proof a bit}
\begin{claim} For all $i\in [n]$, $\Pay_i(x^{(i,1)}) \leq
  \Pay_i(x^{(i+1,0)}) \leq P$.\end{claim}
  
  To prove this claim, we first note that all players have privacy valuation 0 in $x^{(i+1,0)}$, so $\Pay_i(x^{(i+1,0)}) \leq P$ by 
  the definition of $P$.
  Since player $i$ has privacy valuation 0 in $x^{(i+1,0)}$, we also know that
  privacy loss of player $i$ in input $x^{(i+1,0)}$ is independent of
  her declaration (since $\lambda_i$ respects indifference).  If player $i$ declares $L$ as her valuation instead of 0, she would
  get payment $\Pay_i(x^{(i,1)})$.  By
  truthfulness for privacy-indifferent players, we must have
  $\Pay_i(x^{(i,1)}) \leq\Pay_i(x^{(i+1,0)})$.

  By the definition of $L$ it follows that $x^{(i,1)}$ cannot be
  $\delta$-distinguishable for player $i$ with respect to $\Mout$.
  Otherwise, this would contradict individual rationality because on
  input $x^{(i,1)}$ player $i$ would have privacy loss $> P$ while
  only getting payoff $\leq P$.

  Since $x^{(i,1)}$ is not $\delta$-distinguishable for player $i$
  with respect to $\Mout$, and because $x^{(i,1)}$ is an $i$-neighbor of
  $x^{(i, 0)}$ as well as $x^{(i+1, 0)}$, it follows that
  \begin{equation}
    \label{eq:ahgpoiha}
    \Delta(H^{(i,0)}, H^{(i,1)}) < \delta \text{ and
    }\Delta(H^{(i,1)}, H^{(i+1,0)}) < \delta
  \end{equation}
  Finally, since \autoref{eq:ahgpoiha} holds for all $i \in [n]$, and
  since $H^{(1,0)} = \Mout(0^n, 0^n)$ and $H^{(n+1,0)} = \Mout(1^n, 0^n)$,
  we have by the triangle inequality that
  $$\Delta(\Mout(0^n, 0^n), \Mout(1^n, 0^n)) < 2n\delta$$
  But since $\delta \leq 1/6n$, this contradicts the fact that
  $M$ has non-trivial accuracy, since non-trivial accuracy implies
  that we can distinguish between the output of $\Mout$ on inputs
  $(0^n, 0^n)$ and $(1^n, 0^n)$ with advantage greater than $1/3$,
  simply by checking whether the output is greater than $n/2$.
\end{proof}

\subsection{Subsampling for low-distinguishability privacy loss
  functions}

We comment that the $\delta\leq 1/6n$ bound in \autoref{thm:imp} is tight up to a constant factor.
Indeed, if players do not incur significant losses when their
inputs are $O(1/n)$-distinguishable, then an extremely
simple mechanisms based on sub-sampling can be used to achieve
truthfulness, individual rationality, and accuracy with finite budget.

Namely, suppose that the privacy loss functions are such that if for
all $i$, if player $i$'s input is not $C/n$-distinguishable
for some constant $C$, then regardless of $v_i$, the loss to player
$i$ is bounded by $P$.  Then the following mechanism is truthful,
individually rational, and accurate: pay all players $P$, select at
random a subset $A$ of size $k$ for some $k<C$ from the population, and
output $(n/|A|)\cdot \sum_{i \in A} b_i$. \Snote{added a factor of $n$ to output} 
By a Chernoff Bound, this
mechanism is $(\eta, 2e^{-\eta^2k})$-accurate for all $\eta>0$.  By construction no
player's input is $C/n$-distinguishable and therefore their
privacy loss is at most $P$ and the mechanism is individually
rational.  Finally mechanism is truthful since it behaves
independently of the player declarations.

\section{Positive results}

\Dnote{moved this part here since it's not really about monotonic
  valuations}
For our positive results, we will require the following
natural property from our privacy loss functions.  Recall that we
allow the privacy loss functions $\lambda^{(M)}_i$ to depend on a
player's report $v'_i$, in addition the the player's true type.  We
require the dependence on $v'_i$ to be well-behaved in that if
changing declarations does not change the output distribution, then it
also does not change the privacy loss.
\begin{definition}
  A privacy loss function $\lambda^{(M)}_i$ \emph{respects identical
    output distributions} if the following holds: for all $b, v$, if
  the distribution of $M_{-i}(b, v)$ is identical to $M_{-i}(b, (v_{-i},
  v'_i))$, then for all $s,p$, it holds that $\lambda^{(M)}_i(b, v,
  v'_i, s, p) = \lambda^{(M)}_i(b, v, v_i, s, p)$.
\end{definition}
\Snote{changed $\Mout$ to $M_{-i}$ in the above definition, since it seems to be what we want}
The above definition captures the idea that if what the privacy adversary can see
(namely the output of $M_{-i}$) doesn't change, then player $i$'s privacy loss should not change.

\subsection{Monotonic valuations}

\Dnote{added a sentence}
We now define our main conceptual restriction of the privacy loss
functions to consider only \emph{monotonic valuations}.

\Snote{moved defs of monotone neighbors, monotone dp here}

\Snote{redefined monotonically related to make symmetry explicit}
\begin{definition}
  Two player types $(b_i, v_i), (b'_i, v'_i)\in \zo\times \R$ are said to be
  \emph{monotonically related} iff ($b_i=0$, $b'_i=1$, and $v_i \leq
  v'_i$) or ($b_i=1$, $b'_i=0$, and $v_i \geq v'_i$). \Dnote{added
    subscripts to variables in previous sentence to stay consistent
    with convention that $b, v$ denote vectors}
    Two inputs $(b, v),
  (b', v')\in \zo^n\times \R^n$ are \emph{monotonic $i$-neighbors} if they are
  $i$-neighbors and furthermore $(b_i, v_i), (b'_i, v'_i)$ are
  monotonically related.  They are {\em monotonic neighbors} if they are
  monotonic $i$-neighbors for some $i \in [n]$.
\end{definition}

\Dnote{moved monotonic dp definition, moved definition of bounded by
monotonic dp here}

\Snote{new sentence}
Following \cite{CCKMV13}, we also make the assumption that the privacy loss functions on a given output $(s,p_{-i})$ are
bounded by the amount of influence that player $i$'s report has on the probability of the output:

\begin{definition}
  A privacy loss function $\lambda^{(M)}_i$ is \emph{bounded by
    differential privacy} if the following holds:
  $$ \left|\lambda^{(M)}_i(b, v, v'_i, s, p_{-i})\right| \leq v_i \cdot \left(
    \max_{(b''_i, v''_i)} \log \frac{\Pr[M_{-i}(b, v) = (s, p_{-i})]}
    {\Pr[M_{-i}((b_{-i}, b''_i), (v_{-i}, v''_i)) = (s, p_{-i})]} \right)$$

  A privacy loss function $\lambda^{(M)}_i$ is \emph{bounded by
    differential privacy for monotonic valuations} if:
  $$ \left|\lambda^{(M)}_i(b, v, v'_i, s, p_{-i}) \right| \leq v_i
  \cdot \left( \max_{(b''_i, v''_i) \text{ mon. related to }
      (b_i, v_i)} \log \frac{\Pr[M_{-i}(b, v) = (s, p_{-i})]}
    {\Pr[M_{-i}((b_{-i}, b''_i), (v_{-i}, v''_i)) = (s, p_{-i})]}
  \right) $$
\end{definition}
\Snote{worth thinking about whether there is a Bayesian interpretation of the monotone version of the definition} 

As noted and used in \cite{CCKMV13}, the RHS in the above definition can be upper-bounded by the level of
(pure) differential privacy, and the same holds for monotonic valuations:
\Snote{dropped mention of $(\eps,\delta)$ - seemed to be a distraction}

\begin{fact}
  \label{fact:bound}
  If $M_{-i}$ is $\eps$-differentially private for $i$-neighbors (\ie
  \autoref{eq:ahpobinpz} holds for all $i$-neighbors) and
  $\lambda^{(M)}_i$ is bounded by differential privacy (even if only
  for monotonic valuations), then player $i$'s privacy loss is bounded
  by $v_i \eps$ regardless of other player types, player declarations,
  or outcomes.
\end{fact}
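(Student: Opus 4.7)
The proof is essentially an unpacking of two definitions. The plan is to start from the definition of ``bounded by differential privacy'' (or its monotonic variant) and simply apply the $\eps$-differential privacy hypothesis on $M_{-i}$ term by term inside the maximum.

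More concretely, I would first fix any $b, v, v'_i, s, p_{-i}$. By hypothesis,
$$\left|\lambda^{(M)}_i(b, v, v'_i, s, p_{-i})\right| \;\leq\; v_i \cdot \max_{(b''_i, v''_i) \in S} \log \frac{\Pr[M_{-i}(b, v) = (s, p_{-i})]}{\Pr[M_{-i}((b_{-i}, b''_i), (v_{-i}, v''_i)) = (s, p_{-i})]},$$
where $S$ is either the set of all possible $(b''_i, v''_i)$ or the subset of those monotonically related to $(b_i, v_i)$. In either case, for each $(b''_i, v''_i) \in S$ the input $((b_{-i}, b''_i), (v_{-i}, v''_i))$ is an $i$-neighbor of $(b, v)$, and so the $\eps$-differential privacy of $M_{-i}$ for $i$-neighbors directly yields
$$\Pr[M_{-i}(b, v) = (s, p_{-i})] \;\leq\; e^\eps \cdot \Pr[M_{-i}((b_{-i}, b''_i), (v_{-i}, v''_i)) = (s, p_{-i})].$$
Taking logarithms, each term inside the maximum is at most $\eps$, so the maximum itself is at most $\eps$, and the whole expression is bounded by $v_i \eps$, as claimed. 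The monotonic case is handled identically: restricting the $\max$ to a subset can only decrease its value, so the same $\eps$ upper bound applies.

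The main (very mild) obstacle is simply making sure the quantifiers line up: we need every candidate $(b''_i, v''_i)$ appearing in the max to produce an $i$-neighbor of $(b, v)$ in the sense of \autoref{def:dp}, so that the $\eps$-DP hypothesis is applicable. This is immediate from the construction $((b_{-i}, b''_i), (v_{-i}, v''_i))$, which agrees with $(b, v)$ outside coordinate $i$ by definition. Since the statement is uniform over all $b, v, v'_i, s, p_{-i}$, the bound $|\lambda^{(M)}_i| \leq v_i \eps$ holds regardless of the other players' types, the player's declaration, or the particular outcome, which is exactly the conclusion of \autoref{fact:bound}.
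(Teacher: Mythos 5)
Your proof is correct and is exactly the calculation the paper has in mind; the paper states this as a bare Fact (attributing the observation to \cite{CCKMV13}) without spelling out the argument, and your unpacking of the two definitions --- applying the $\eps$-DP bound to singleton events $\{(s,p_{-i})\}$, noting each candidate $((b_{-i},b''_i),(v_{-i},v''_i))$ is an $i$-neighbor, and observing that the monotonic case only restricts the $\max$ --- fills it in faithfully. The only thing left implicit (harmlessly, since the fact is phrased ``regardless of \ldots outcomes'') is that the pointwise bound $|\lambda^{(M)}_i|\le v_i\eps$ immediately gives the same bound on the expectation $\Loss^{(M)}_i$.
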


\Dnote{added this subsection}
\Snote{removed subsection heading and made minor edits below}

As hinted at in the definition of privacy loss functions bounded by
differential privacy for monotonic valuations, one can define
an analogue of differential privacy where we take the maximum over just monotonically
related neighbors.  However this notion is not that different from the original notion of differential privacy, \Knote{change "not that interesting" to "not that different from differential privacy"} since satisfying such a definition for
some privacy parameter $\eps$ immediately implies satisfying
(standard) differential privacy for privacy parameter $3\eps$, since
every two pairs $(b_i,v_i)$ and $(b_i',v_i')$ are at distance at most 3 in the
monotonic-neighbor graph.  \Knote{should the factor be 3 or 2?} The monotonic neighbor notion becomes more interesting
if we consider a further variant of differential privacy where
the privacy guarantee $\eps_i$ afforded to an individual depends on her data $(b_i,v_i)$
(e.g. $\eps_i=1/v_i$).  We defer exploration of this notion to a future version of this paper.  

\Dnote{should we define the other definition with individual $\eps_i$ here?}

\subsection{Mechanism for monotonic valuations}

The idea behind our mechanism for players with monotic valuations (\autoref{alg:mon})
is simply to treat the data bit as 0 (the insensitive value) for all players who value privacy too much.

\begin{algorithmf}{Mechanism for monotonic valuations \label{alg:mon}}
  Input: $(b, v) \in \zo^n \times \R^n$. Auxiliary inputs: budget $B
  > 0$, privacy parameter $\eps > 0$.
\begin{enumerate}
\item For all $i \in [n]$, set $b'_i = b_i$ if $2\eps v_i \leq
  B/n$, otherwise set $b'_i = 0$.
\item Output $\sum_{i=1}^n b'_i + \Geom(\eps)$.
\item Pay $B/n$ to player $i$ if $2\eps v_i \leq B/n$, else pay
  player $i$ nothing.
\end{enumerate}
\end{algorithmf}

\begin{theorem}
  \label{thm:mon}
  For privacy loss functions that are bounded by differential privacy
  for monotonic valuations and respect identical output distributions,
  the mechanism $M$ in \autoref{alg:mon} satisfies the following:
  \begin{enumerate}
  \item \label{item:truth} $M$ is truthful for all players with $2\eps
    v_i \leq B/n$.
  \item $M$ is individually rational for all players
  \item Assume only that the truthful players described in Point
    \ref{item:truth} do indeed declare their true types. Letting
    $\eta$ denote the fraction of players where $b_i = 1$ and $2\eps
    v_i > B / n$, it holds that $M$ is $([\eta + \gamma, \gamma],
    2e^{-\eps \gamma n})$-accurate.
  \end{enumerate}
\end{theorem}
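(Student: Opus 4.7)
The plan is to verify the three properties by a case analysis based on whether each player is ``low'' ($2\eps v_i \leq B/n$) or ``high'' ($2\eps v_i > B/n$), exploiting the sharp change in the mechanism's behavior at this threshold. As a preliminary step I note that $M_{-i}$ is $\eps$-differentially private for all $i$-neighbors, since $\sum_j b'_j$ has sensitivity at most $1$ in player $i$'s type, $\Mpay_{-i}$ is independent of player $i$, and we add $\Geom(\eps)$ noise; by Fact~\ref{fact:bound} this bounds every player's privacy loss by $v_i\eps$.

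For individual rationality of a low player (truth-telling), the payment $B/n$ dominates the loss bound $v_i\eps \leq B/(2n)$, giving utility at least $B/(2n)>0$. For a high player the payment is $0$ and the $v_i\eps$ bound is too weak to establish IR, so the key observation is that every \emph{monotonic} $i$-neighbor of a high input produces the identical output distribution of $M_{-i}$: if $b_i=0$, a monotonic neighbor $(1,v''_i)$ has $v''_i\geq v_i$ and is therefore also high, so also yields $b'_i=0$; if $b_i=1$, a monotonic neighbor $(0,v''_i)$ trivially yields $b'_i=0$. The log-ratio in the ``bounded by differential privacy for monotonic valuations'' definition therefore vanishes, forcing the loss to exactly $0$ and matching the $0$ payment.

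For truthfulness of a low player, I compare the truthful utility to any deviation $v'_i$. If $v'_i$ is also low, $M_{-i}$ and $\Pay_i$ are unchanged, so by ``respects identical output distributions'' the privacy loss is unchanged and the utility is identical. If $v'_i$ is high, the deviated mechanism sets $b'_i=0$, producing the same $M_{-i}$ distribution across all $i$-neighbors of the deviated input; the privacy loss is thus $0$, but the payment also drops to $0$, so the deviation utility is $0$, dominated by the truthful utility $\geq B/(2n)$. For accuracy, let $L=\{i:b_i=1\text{ and }2\eps v_i>B/n\}$, so $|L|=\eta n$. Using that low players report truthfully (so $b'_i=b_i$ for them) and that for high players $b'_i\in\{0,b_i\}$ depending on whether they misreport as low or not, we obtain $\sum_i b'_i\in[\sum_i b_i-\eta n,\,\sum_i b_i]$, since only players in $L$ can ``lose'' a $1$-bit. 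Consequently $\Mout=\sum_i b'_i+\Geom(\eps)$ lies in $((\overline{b}-\eta-\gamma)n,(\overline{b}+\gamma)n)$ whenever $|\Geom(\eps)|<\gamma n$, and the standard symmetric-geometric tail bound $\Pr[|\Geom(\eps)|\geq\gamma n]\leq 2e^{-\eps\gamma n}$ gives the claimed accuracy.

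The main conceptual step is the zero-privacy-loss claim for high players: without the monotonic restriction on the loss function, the best bound would be $v_i\eps$, which grows unboundedly with $v_i$ and would destroy IR. The monotonic restriction, together with the mechanism's ``all high inputs map to $0$'' rule, is precisely what collapses the relevant log-ratio to $0$, so that even very privacy-sensitive players are compensated adequately by a payment of $0$. The rest of the proof is a routine payment-versus-loss comparison plus a standard noise tail bound.
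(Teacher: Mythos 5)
Your proof follows essentially the same structure as the paper's: establish that $M_{-i}$ is $\eps$-DP for $i$-neighbors, invoke Fact~\ref{fact:bound} to bound losses, show that for high players declaring truthfully the log-ratio in the ``bounded by differential privacy for monotonic valuations'' definition vanishes (which gives IR), and finish accuracy via the geometric tail bound. The IR and accuracy parts are correct and match the paper's reasoning. However, there is a genuine error in your truthfulness argument when a low player deviates to a high report $v'_i$. You claim the deviation's privacy loss is exactly $0$ because the \emph{deviated} input yields the same $M_{-i}$ distribution at all monotonic $i$-neighbors of the deviated input. But the definition bounds $|\lambda^{(M)}_i(b,v,v'_i,s,p_{-i})|$ by $v_i$ times the max log-ratio between $M_{-i}(b,v)$ at the \emph{true} input and $M_{-i}$ at monotonic $i$-neighbors of the \emph{true} type $(b_i,v_i)$---it does not reference the reported type $v'_i$. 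For a low player, the true input gives $b'_i = b_i$, while a monotonic $i$-neighbor can flip this (e.g.\ if $b_i=1$, the neighbor $(0,v''_i)$ with $v''_i\leq v_i$ yields $b'_i = 0$), so the log-ratio is not $0$ and the definition gives no basis for concluding the deviation loss is $0$.

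The conclusion still holds, but by a different (and two-sided) argument: Fact~\ref{fact:bound} bounds $|\Loss^{(M)}_i(b,v,\cdot)|\leq v_i\eps$ \emph{under any report}, since the pointwise bound on $|\lambda^{(M)}_i|$ does not depend on the declaration. Hence the change in loss from deviating is at most $2v_i\eps \leq B/n$, which cannot overcome the $B/n$ drop in payment. This is exactly what the paper argues; note in particular that the two-sided $v_i\eps$ bound is needed because the framework allows negative losses, so the deviation utility $-\Loss_{\text{lie}}$ can be as large as $+v_i\eps$, not just $0$. Replace the ``loss is thus $0$'' step with this bound.
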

\begin{proof}

  \textbf{Truthfulness for players with $2\eps v_i \leq B/n$}: if
  $2\eps v_i \leq B/n$, then declaring any $v'_i \leq B/(2\eps
    n)$ has no effect on the output of the mechanism, and so there is
  no change in utility since the privacy loss functions respect
  identical output distributions.  If player $i$ declares some $v'_i >
  B/(2 \eps n)$, then he loses $B/n$ in payment.  Because
  $M_{-i}$ is $\eps$-differentially private if for $i$-neighbors
  (recall we assume an observer cannot see the change in $p_i$) and we
  assumed that the privacy loss functions are bounded by differential
  privacy for monotonic valuations,  it follows that player $i$'s privacy
  loss has absolute value at most $2\eps v_i$ under a report of $v_i$ and
  under a report of $v_i'$  (\autoref{fact:bound}).
Thus, there is at most a
  change of $2 \eps v_i$ in privacy, which is not sufficient
  to overcome the payment loss of $B/n$. \Snote{elaborated last couple of sentences}

\Snote{reordered proof of IR below}
  \textbf{Individual rationality:} consider any vector of types $b,
  v$ and any player $i$.  If $v_i \leq B/2\eps n$ then player
  $i$ receives payment $B/n$.  By the hypothesis that the privacy loss
  functions are bounded by differential privacy for monotonic
  valuations, and because the mechanism is $\eps$-differentially
  private, the privacy loss to player $i$ is bounded by $\eps v_i <
  B/n$ (\autoref{fact:bound}), satisfying individual rationality.

  Now suppose that player $i$ has valuation $v_i > \tfrac{B}{2 \eps
    n}$.  In this case the payment is $0$.
  The mechanism sets $b'_i = 0$, and for every $(b''_i, v''_i)$
  monotonically related to $(b_i, v_i)$ the mechanism also sets $b'_i
  = 0$.  Since the report of player $i$ does not affect $b'_j$ or
  the payment to player $j$ for $j\neq i$,
 monotonic neighbors will produce the \emph{exact} same output
  distribution of $M_{-i}$.

  Therefore the privacy loss of player $i$ is 0.  Indeed,
  since the
  privacy loss function is bounded by differential privacy for
  monotonic valuations, we have:
  $$ \left|\lambda^{(M)}_i(b, v, v'_i, s, p_{-i}) \right| \leq v_i
  \cdot \left( \max_{(b''_i, v''_i) \text{ mon. related to } (b_i,
      v_i)} \log \frac{\Pr[M_{-i}(b, v) = (s, p_{-i})]}
    {\Pr[M_{-i}((b_{-i}, b''_i), (v_{-i}, v''_i)) = (s, p_{-i})]}
  \right) = 0$$ %

  \textbf{Accuracy:} the bits of the $(1 - \eta)$ fraction of truthful
  players and players with $b_i = 0$ are always counted correctly,
  while the bits of the $\eta$ fraction of players with $b_i = 1$ and
  large privacy valuation $v_i \geq B/(2 \eps n)$ are either
  counted correctly (if they declare a value less than $B/(2
    \eps n)$) or are counted as $0$ (if they declare otherwise).

  This means that $\overline{b'} = \sum_{i=1}^n b'_i$ and
  $\overline{b} = \sum_{i=1}^n b_i$ satisfy
  $\overline{b'}\in [\overline{b}-\eta n,\overline{b'}]$.
  \Snote{removed $1/n$ factors}

  By the definition of symmetric geometric noise, it follows that
  (letting $v'$ be the declared valuations of the players) it holds
  that
  $$\Pr[|\Mout(b, v') - \overline{b'}| \geq \gamma n] < 2
  e^{-\eps \gamma n}.$$
  \Snote{changed strict inequalities to non-strict in this tail bound
    to match new def of accuracy - please check} \Dnote{fine with me -
    I made the bound itself strict to stay consistent with the
    definition, this is true from the definition of geometric noise}
  The theorem follows.

\end{proof}

\subsubsection{Achieving better truthfulness}

We can improve the truthfulness of \autoref{thm:mon} to include all
players with data bit $0$.

\begin{theorem}
  \label{thm:moretruth}
  Let $M'$ be the same as in \autoref{alg:mon}, except that \emph{all}
  players with $b_i = 0$ are paid $B/n$, even those with large privacy
  valuations. Suppose that the $\lambda^{(M')}_i$ are bounded by
  differential privacy for monotonic valuations and also respect identical
  output distributions.  Then the conclusions of \autoref{thm:mon}
  hold and in addition the mechanism is truthful for all players with
  data bit $b_i = 0$.
\end{theorem}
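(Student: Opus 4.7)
The plan is to observe that $\Mout$ is unchanged from \autoref{alg:mon} — only the payments to players with $b_i = 0$ and large $v_i$ have been modified (increased from $0$ to $B/n$). Consequently, the accuracy bound in Point~3 of \autoref{thm:mon} is inherited immediately, since accuracy depends only on $\Mout$. It then remains to reverify truthfulness for low-valuation players, to verify individual rationality across all players in light of the new payments, and finally to prove the new truthfulness-for-$b_i=0$ conclusion.

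For the new truthfulness claim (all players with $b_i = 0$), I would argue that both the payment and the distribution of $M_{-i}$ are \emph{constant} in player $i$'s declaration $v'_i$. The payment is $B/n$ whenever $b_i = 0$, regardless of $v'_i$. For the output, note that Step~1 of the mechanism sets $b'_i = b_i = 0$ for any player with $b_i = 0$, regardless of how $v'_i$ compares to the threshold $B/(2\eps n)$; and the payments to the other players depend only on $(b_{-i}, v_{-i})$, which is also fixed. Hence $M_{-i}(b, (v_{-i}, v'_i)) = M_{-i}(b, v)$ as distributions, and by the \emph{respects identical output distributions} property the privacy loss is unchanged. Since both sides of the truthfulness inequality are then equal, truthfulness holds (with equality) for every $b_i = 0$ player.

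Truthfulness for $b_i = 1$ players with $2\eps v_i \leq B/n$ is unchanged from \autoref{thm:mon}: for such players the payment rule (pay $B/n$ iff $2\eps v'_i \leq B/n$) is identical to the original mechanism, and the argument via \autoref{fact:bound} carries over verbatim. For individual rationality, the only new case is a player with $b_i = 0$ and $v_i > B/(2\eps n)$: the payment is $B/n \geq 0$, and every monotonic neighbor has the form $(1, v'_i)$ with $v'_i \geq v_i > B/(2\eps n)$, so the mechanism sets $b'_i = 0$ in both the original and neighboring inputs. Thus $M_{-i}$ has the identical distribution at $(b, v)$ and at every monotonic neighbor; the bound from the \emph{bounded by differential privacy for monotonic valuations} property therefore gives privacy loss~$0$, and IR holds trivially. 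The remaining IR cases coincide with those already handled in \autoref{thm:mon}.

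The only point where a subtle issue could arise is the IR analysis for $b_i = 1$ players with large valuation, because the modification makes the payment to player $i$ depend on $b_i$ in an asymmetric way (paying $0$ under $b_i = 1$ but $B/n$ under the monotonic neighbor $b_i = 0$). I expect this to be the main thing to double-check, but it is resolved by the same observation used throughout: $M_{-i}$ excludes player $i$'s own payment, and the payments to $j \ne i$ depend only on $(b_j, v_j)$, which are held fixed across $i$-neighbors; so the payment modification is invisible to the privacy adversary, and the monotonic-neighbor output distributions remain identical, yielding privacy loss~$0$ and IR just as in \autoref{thm:mon}.
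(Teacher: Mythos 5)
Your proof is correct and takes essentially the same approach as the paper's: accuracy is inherited because $\Mout$ is unchanged; for players with $b_i=0$ the payment ($B/n$) and the distribution of $M_{-i}$ are both constant in $v'_i$, so the respects-identical-output-distributions property gives truthfulness with equality; truthfulness for low-valuation $b_i=1$ players carries over unchanged; and IR for the new cases follows because $b'_i=0$ across all monotonic $i$-neighbors, so $M_{-i}$ is unchanged and privacy loss is $0$. The paper dismisses the IR and accuracy checks in one sentence ("Increasing the payments\ldots{} does not hurt individual rationality or accuracy"), whereas you spell out the IR cases and flag the $b_i=1$ high-valuation case explicitly, but the substance and the key observation are the same.
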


Note that, unlike \autoref{alg:mon}, here the payment that the mechanism makes to players
depends on their data bit, and not just on their reported valuation.  This might make it impractical
in some settings (e.g. if payment is needed before players give permission to view their data bits).

\begin{proof}
  Increasing the payments to the players with $b_i = 0$ and privacy
  valuation $v_i > \frac{B}{2 \eps n}$ does not hurt individual
  rationality or accuracy.  We must however verify that we have not
  harmed truthfulness.  Since players are not allowed to lie about
  their data bit, the same argument for truthfulness of players with
  $b_i = 1$ and $v_i \leq B/(2 \eps n)$ remains valid.  It is
  only necessary to verify that truthfulness holds for all players with
  $b_i = 0$.

  Observe that for players with $b_i = 0$, the output distribution of
  the mechanism is identical regardless of their declaration for
  $v_i$.  Therefore by the assumption that the $\lambda^{(M)}_i$
  respect identical output distributions, changing their declaration
  does not change their privacy loss.  Furthermore, by the definition
  of $M'$ changing their declaration does not change their payment as
  all players with $b_i = 0$ are paid $B/n$.  Therefore, there is no
  advantage to declaring a false valuation.
\end{proof}

We remark that \autoref{thm:moretruth} is only preferable to \autoref{thm:mon} in settings
where knowing the true valuations has some value beyond simply helping
to achieve an accurate output; in particular, notice that $M'$ as
defined in \autoref{thm:moretruth} does not guarantee any better
accuracy or any lower payments (indeed, it may make more payments than
the original \autoref{alg:mon}).

\section{Lower bounds}

\subsection{Impossibility of non-trivial accuracy for all privacy
  valuations with monotonic privacy}

One natural question that \autoref{alg:mon} raises is whether we can
hope to adaptively set the budget $B$ based on the valuations of the players 
and thereby achieve accuracy
for all inputs, not just inputs where most players' privacy valuations
are small relative to some predetermined budget.  In this section we
show that this is not possible, even when only considering players who
care about privacy for monotonic neighbors.

\Snote{moved monotone distinguishability def here}
\begin{definition}
  An input $(b, v)\in \zo^n\times \R^n$ is \emph{$\delta$-monotonically distinguishable} for player $i$
  with respect to a randomized function $f$ if there is a monotonic $i$-neighbor $(b', v')$ such that $\Delta(f(b, v),
  f(b', v')) \geq \delta$.
\end{definition}

\begin{definition}
  \label{def:increasingmon}
  A privacy loss function $\lambda^{(M)}_i$ for a mechanism $M$ and
  player $i$ is  \emph{increasing for $\delta$-monotonic distinguishability}
  if there exists a real-valued function $T_i$ such that
  for all $\ell > 0$, $b \in \zo^n$ and $v_{-i} \in \R^{n-1}$, if 
  $v_i \geq T_i(\ell, b, v_{-i})$ and if $(b, v)$ is
  $\delta$-monotonically
  distinguishable for player $i$ with respect to $\Mout$, then
  $\Loss^{(M)}_i(b, v, v_i) > \ell$.
\end{definition}

\Snote{changed 
``has finite payments for privacy-aware players'' to ``always has finite payments''}
\Dnote{merged definitions into theorem statement.}

\begin{theorem}
  \label{thm:monimp}
  Fix a mechanism $M$ and a number of players $n$, and non-negative
  privacy loss functions $\lambda^{(M)}_1, \ldots, \lambda^{(M)}_n$.
  Suppose that the $\lambda^{(M)}_i$ respect indifference and are
  increasing for $\delta$-monotonic distinguishability for $\delta
  \leq \tfrac{1}{3n}$.

  Suppose $M$ satisfies all the following:
  \begin{itemize*}
  \item $M$ is individually rational.
  \item $M$ always has finite payments, in the sense that  for all
    $b \in \zo^n, v \in \R^n$ and all $i \in [n]$ it holds that
    $\Pay^{(M)}_i(b, v)$ is finite.
  \item $M$ is truthful for privacy-indifferent players, as in
    \autoref{thm:imp}
  \end{itemize*}
  Then $M$ does not have non-trivial accuracy for all privacy
  valuations, namely $M$ cannot be $(1/2, 1/3)$-accurate on $(0^n, v)$
  and $(1^n, v)$ for all $v \in \R^n$.
\end{theorem}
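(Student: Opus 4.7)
The plan is to adapt the hybrid argument from \autoref{thm:imp} so that every consecutive pair of hybrids differs by a monotonic neighbor move, allowing the increasing property for $\delta$-monotonic distinguishability to be invoked. The key twist compared to \autoref{thm:imp} is that we cannot use intermediate hybrids of the form ``bit already flipped, valuation reset to $0$'': changing only a player's valuation (while keeping her bit fixed) is not a monotonic neighbor transition. Instead, each player $j$'s bit must flip directly from $(0, 0)$ to $(1, L_j)$ in a single step, with $L_j$ chosen adaptively to be so large that the increasing property combined with individual rationality forces low distinguishability.

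Concretely, I would define hybrids $h_0 = (0^n, 0^n)$ and $h_j = ((1^j, 0^{n-j}), (L_1, \ldots, L_j, 0^{n-j}))$ for $j = 1, \ldots, n$. Each transition $h_{j-1} \to h_j$ changes only player $j$'s type from $(0, 0)$ to $(1, L_j)$, which is a valid monotonic $j$-neighbor move since $0 \leq L_j$. The $L_j$ are chosen iteratively: having fixed $L_1, \ldots, L_{j-1}$, let $h'_j$ be the input $h_j$ but with player $j$'s valuation reset to $0$, and set $P_j = \Pay^{(M)}_j(h'_j)$. This $P_j$ is finite by the finite-payments hypothesis and independent of $L_j$, so I can pick $L_j > T_j(P_j, (1^j, 0^{n-j}), (L_1, \ldots, L_{j-1}, 0^{n-j}))$ using the fact that $\lambda^{(M)}_j$ is increasing for $\delta$-monotonic distinguishability.

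The central claim is that $\Delta(\Mout(h_{j-1}), \Mout(h_j)) < \delta$ for every $j$. Suppose for contradiction that this statistical distance is at least $\delta$; since $h_{j-1}$ is a monotonic $j$-neighbor of $h_j$, this makes $h_j$ be $\delta$-monotonically distinguishable for player $j$, so the choice of $L_j$ gives $\Loss^{(M)}_j(h_j, L_j) > P_j$. On the other hand, truthfulness for privacy-indifferent players applied at $h'_j$ (where player $j$'s true valuation is $0$) yields $\Pay^{(M)}_j(h_j) \leq P_j$: misreporting $L_j$ instead of $0$ at $h'_j$ would give payment $\Pay^{(M)}_j(h_j)$, and by ``respects indifference'' this misreport leaves player $j$'s privacy loss unchanged, so truthfulness forces $\Pay^{(M)}_j(h'_j) \geq \Pay^{(M)}_j(h_j)$. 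Combining these two inequalities contradicts individual rationality at $h_j$.

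Summing over the $n$ transitions via the triangle inequality and using $\delta \leq 1/(3n)$ yields $\Delta(\Mout(h_0), \Mout(h_n)) < 1/3$, where $h_0 = (0^n, 0^n)$ and $h_n = (1^n, v^*)$ with $v^* = (L_1, \ldots, L_n)$. As in the closing argument of \autoref{thm:imp}, simultaneous $(1/2, 1/3)$-accuracy on these two inputs would force statistical distance at least $1/3$, a contradiction; hence $M$ fails accuracy on at least one of $(0^n, 0^n)$ or $(1^n, v^*)$, so by choosing $v \in \{0^n, v^*\}$ appropriately we obtain a $v$ for which $M$ is not $(1/2, 1/3)$-accurate on both $(0^n, v)$ and $(1^n, v)$. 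The main subtlety to navigate is the order of quantifiers in choosing the $L_j$: each $P_j$ must be independent of $L_j$ (and of every later $L_k$), which is precisely why the detour through the auxiliary input $h'_j$ and the appeal to truthfulness for privacy-indifferent players is essential.
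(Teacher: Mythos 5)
Your proposal is correct and follows essentially the same approach as the paper's proof: the paper builds the identical chain $x^{(1,0)},\ldots,x^{(n+1,0)}$ (your $h_0,\ldots,h_n$), uses the same auxiliary inputs $x^{(i,1)}$ (your $h'_j$) to define $P_i$ and invoke truthfulness for privacy-indifferent players, and concludes via the same triangle-inequality bound $\Delta(\Mout(0^n,0^n),\Mout(1^n,L))<n\delta\leq 1/3$. The observation you highlight as a ``key twist'' --- that each player must flip $(0,0)\to(1,L_j)$ in one monotonic step, with the auxiliary $(1,0)$ input used only to pin down $P_j$ and not as a link in the distinguishability chain --- is exactly how the paper's hybrids are organized as well.
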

\begin{proof}
  The argument follows the same outline as the proof of
  \autoref{thm:imp}, \ie by constructing a sequence of hybrid inputs
  and using truthfulness for privacy-indifferent players and
  individual rationality to argue that the neighboring hybrids must
  produce statistically close outputs.  However, we have to take more
  care here because for the hybrids in this proof there is no uniform
  way to set the maximum payment $P$ and threshold valuation $L$ for
  achieving privacy loss $>P$ at the beginning of the argument, since here we
  allow the finite payment bound to depend on the valuations (whereas \autoref{thm:imp}
  only refers to the payment bound when all valuations are zero).  Instead, we
  set $P_i, L_i$ for the $i$'th hybrids in a way that depends on $L_{[i-1]} = 
  (L_1,\ldots,L_{i-1})$.

\Snote{elaborated hybrids, and introduced double-indexed hybrid inputs like \autoref{thm:imp}}

  As before, we have $2n+1$ inputs
  $x^{(1,0)}, x^{(1,1)}, x^{(2,0)}, x^{(2,1)}, \ldots, x^{(n,0)}, x^{(n,1)}, x^{(n+1,0)}$, which we define inductively as follows.
In $x^{(1,0)}$, all players have data bit 0 and privacy valuation 0.  That is, $x^{(1,0)}=(0^n,0^n)$.
 From $x^{(i,0)}$, we define $x^{(i,1)}$ by changing player $i$'s data bit from 0 to 1. 
  From $x^{(i,1)}=(b^{(i)},v^{(i)})$, we define $P_i = \Pay_i(x^{(i,1)})$ to be the amount that player $i$ is
  paid in $x^{(i,1)}$, and $L_i = T_i(P_i,b^{(i)},v^{(i)}_{-i})$ to be a privacy valuation beyond which payment $P_i$ does not
  compensate for $\delta$-distinguishability (as promised by \autoref{def:increasingmon}).  Then we define $x^{(i+1,0)}$ by increasing the valuation of player $i$ from 0 to $L_i$. 
  By induction, for $i = 1, \ldots, n+1$, we have $$x^{(i,0)} = (1^{i-1} 0^{n-i+1}, L_{[i-1]} 0^{n-i+1}).$$  \Dnote{fixed indices in previous sentece,
    was off by one} Define the distribution $H^{(i)} = \Mout(x^{(i,0)})$.

  \begin{claim}
    \label{clm:paogih}
    $\Pay_i(x^{(i+1,0)}) \leq \Pay_i(x^{(i,1)}) = P_i$
  \end{claim}
 
  On input $x^{(i,1)}$, player $i$ has privacy valuation 0, so his privacy loss 
  is independent of his declaration (since $\lambda_i$ respects indifference).  Declaring 
  $L_i$ would change the input to $x^{(i+1,0)}$, so by truthfulness for privacy-indifferent players,
  we have $\Pay_i(x^{(i+1,0)}) \leq \Pay_i(x^{(i,1)})$.

  By the definition of $L_i$, $x^{(i+1,0)}$ cannot be
  $\delta$-monotonically distinguishable for player $i$ with respect
  to $\Mout$.  Otherwise, this would contradict individual rationality
  because on input $x^{(i+1,0)}$ player $i$ would have privacy loss greater than $P_i$
  while only getting a payoff of at most $P_i$ (by \autoref{clm:paogih}).

  Since $x^{(i+1,0)}$ is not $\delta$-monotonically distinguishable for
  player $i$ with respect to $\Mout$, and because $x^{(i,0)}$ is an
  $i$-monotonic neighbor of $x^{(i+1,0)}$, it follows that
  $\Delta(H^{(i-1)}, H^{(i)}) < \delta$.
  Finally, since this holds for all $i \in [n]$, the triangle
  inequality implies that $\Delta(H^{(0)}, H^{(n)}) < n \delta$.
  But since $\delta \leq 1/3n$, this implies that
  $$\Delta(\Mout(0^n, 0^n), \Mout(1^n, L)) < 1/3$$
  contradicting the fact that $M$ has non-trivial accuracy for
  all privacy valuations.
\end{proof}

\subsection{Tradeoff between payments and accuracy}
\label{sec:accurate}
\Snote{as discussed, I think we need to change the interpretation of this result.  also the presentation can probably use
editing like done in the other lower bounds} \Dnote{I rewrote the
proof along the lines of the previous ones}

One could also ask whether the accuracy of \autoref{thm:mon} can be
improved, \ie whether it is possible to beat $(\eta + \gamma,
2e^{-\eps \gamma n})$-accuracy.  We now present a result that,
assuming the mechanism does not exceed a certain amount of payment,
limits the best accuracy it can achieve.  (We note however that this
bound is loose and does not match our mechanism.)

In order to prove optimality we will require that the privacy loss
functions be growing with statistical distance, which is a strictly
stronger condition than being increasing for
$\delta$-distinguishability.  However, a stronger requirement is
unavoidable since one can invent contrived privacy loss functions that
are increasing but for which one can achieve $(\eta, 0)$-accuracy by
simply by outputting $\sum b'_i$ as constructed in \autoref{alg:mon}
without noise (while preserving the same truthfulness and individual
rationality guarantees).  Nevertheless, being growing with statistical
distance for monotonic neighbors is compatible with being bounded by
differential privacy for monotonic neighbors (\ie there exist
functions that satisfy both properties), and therefore the following
result still implies limits to how much one can improve the accuracy
of \autoref{thm:mon} for all privacy loss functions bounded by
differential privacy for monotonic neighbors.

\Snote{I would call this ``grows with statistical distance''. It's not
  proportional to, since it's a one-sided inequality} \Dnote{I changed
  the terminology}
\begin{definition}
  $\lambda^{(M)}_i(b, v, v'_i, s, p_{-i})$ is \emph{growing with
    statistical distance (for monotonic neighbors)} if:
  $$\Loss^{(M)}_i(b, v, v_i) \geq v_i \cdot \left(
    \max_{(b', v')} \Delta(\Mout(b, v), \Mout(b', v')) \right)$$ %
  where the maximum is taken over $(b', v')$ that are (monotonic)
  $i$-neighbors of $(b, v)$.
\end{definition}

\begin{theorem}
  Fix a mechanism $M$, a number of players $n$, and privacy loss
  functions $\lambda^{(M)}_i$ for $i = 1, \ldots, n$.  Suppose that
  the $\lambda^{(M)}_i$ respect indifference and are growing with
  statistical distance for monotonic neighbors. \Dnote{yes for
    monotonic neighbors}

  Suppose that $M$ satisfies the following:
  \begin{itemize*}
  \item $M$ is individually rational.
  \item There exists a maximum payment over all possible inputs that
    $M$ makes to any player who declares $0$ privacy valuation.
    Call this maximum value $P$.
  \item $M$ is truthful for privacy-indifferent players as defined
    in \autoref{thm:imp}.
  \end{itemize*}
  Then it holds that for any $\tau, \gamma, \eta > 0$ such that $\eta
  + 2\gamma \leq 1$, and any $\beta < \frac{1}{2} - \tfrac{P}{\tau}
  \gamma n$, the mechanism $M$ cannot be $([\eta + \gamma, \gamma],
  \beta)$-accurate on all inputs where at most an $\eta$ fraction of
  the players' valuations exceed $\tau$.
\end{theorem}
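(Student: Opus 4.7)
The plan is to adapt the hybrid arguments of \autoref{thm:imp} and \autoref{thm:monimp} to a budget-versus-accuracy tradeoff. Assuming for contradiction that $M$ is $([\eta+\gamma,\gamma],\beta)$-accurate on every input in which at most an $\eta$-fraction of the players have valuation strictly greater than $\tau$, I will exhibit two such inputs whose output distributions are forced far apart by accuracy yet close by a hybrid argument charging each step against the bounded payment $P$.

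For the endpoints, take $x_0 = (0^n, 0^n)$ and $x_k = ((1^k,0^{n-k}), (\tau^k,0^{n-k}))$ with $k = (\eta+2\gamma)n$. Both satisfy the hypothesis since no valuation strictly exceeds $\tau$ and $k \leq n$ by the assumption $\eta+2\gamma \leq 1$. Because the sums $0$ and $k$ differ by $(\eta+2\gamma)n$, the accuracy windows $(-(\eta+\gamma)n, \gamma n)$ and $(\gamma n, (\eta+3\gamma)n)$ of $\Mout$ on the two inputs are disjoint, so thresholding the output at $\gamma n$ yields $\Delta(\Mout(x_0), \Mout(x_k)) \geq 1 - 2\beta$. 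Interpolate between them by a chain $x_0, x_1, \ldots, x_k$ in which $x_i$ is obtained from $x_{i-1}$ by flipping player $i$'s type from $(0,0)$ to $(1,\tau)$, so that $x_{i-1}$ and $x_i$ are monotonic $i$-neighbors.

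For each hybrid step I would bound $\Delta_i := \Delta(\Mout(x_{i-1}), \Mout(x_i))$ using the same three-ingredient pattern as the earlier lower bounds:
\begin{enumerate*}
\item Growing with statistical distance for monotonic neighbors, applied at $x_i$ using its monotonic $i$-neighbor $x_{i-1}$, yields $\Loss_i(x_i, \tau) \geq \tau \Delta_i$.
\item Individual rationality gives $\Pay_i(x_i) \geq \Loss_i(x_i, \tau) \geq \tau \Delta_i$.
\item Consider the auxiliary input $\tilde x_i$ obtained from $x_i$ by resetting player $i$'s true valuation to $0$ (bit remains $1$). Since $\lambda_i$ respects indifference, truthfulness for privacy-indifferent players, applied to player $i$ on $\tilde x_i$ comparing the truthful report $0$ versus deviation to $\tau$ (which would run the mechanism on exactly $x_i$), yields $\Pay_i(x_i) \leq \Pay_i(\tilde x_i) \leq P$ by the definition of $P$.
\end{enumerate*}
Combining these gives $\Delta_i \leq P/\tau$, and the triangle inequality yields $\Delta(\Mout(x_0), \Mout(x_k)) \leq k P/\tau$.

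The main obstacle is matching the exact constant $(P/\tau)\gamma n$ stated in the theorem. The argument above yields a contradiction whenever $\beta < \tfrac{1}{2} - \tfrac{(\eta+2\gamma) nP}{2\tau}$, which is slightly weaker than the stated $\beta < \tfrac{1}{2} - \tfrac{P}{\tau}\gamma n$. To close this gap I would anchor the endpoints with the same $\eta n$ high-valuation players already set to bit $1$ in both (so the hybrid chain has length only $2\gamma n$ and touches no low-valuation positions), and then replace the crude thresholding argument with one that exploits the asymmetry of the accuracy window $[\eta+\gamma,\gamma]$ to still separate the two output distributions with probability close to $1-2\beta$ even when the accuracy windows overlap. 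The parenthetical remark preceding the theorem that ``this bound is loose and does not match our mechanism'' is consistent with this being a non-tight analysis.
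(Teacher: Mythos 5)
Your core argument is sound and structurally matches the paper's: a hybrid chain of monotonic neighbors, with each step bounded by $P/\tau$ using the triple of (growing with statistical distance) $+$ (individual rationality) $+$ (truthfulness for privacy-indifferent players, via an auxiliary input where player $i$'s true valuation is reset to $0$). You also correctly observe that with all $k=(\eta+2\gamma)n$ steps charged against valuation $\tau$, you only get a contradiction for $\beta < \tfrac{1}{2} - (\eta+2\gamma)nP/(2\tau)$, which is weaker than the claimed bound.

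The fix you sketch, however, would not work. If you ``anchor'' the same $\eta n$ high-valuation players at bit $1$ in \emph{both} endpoints, the data-bit sums of the two endpoints differ by only $2\gamma n$, whereas the asymmetric accuracy window has total width $(\eta+2\gamma)n$; the two intervals then overlap over a band of width $\eta n$, and no thresholding (asymmetric or otherwise) recovers a separation of $1-2\beta$. The paper's remedy is different: it keeps the full chain of $(\eta + 2\gamma)n$ bit flips (so the endpoint sums are $(\eta+2\gamma)n$ apart and the accuracy intervals are genuinely disjoint), but assigns the first $\eta n$ players a free auxiliary valuation $L$ rather than $\tau$. Those $\eta n$ steps then contribute only $\eta n \cdot P/L$ to the statistical-distance budget, which can be made as small as desired by choosing $L$ large (specifically $L \geq P\eta n/(1 - 2(P/\tau)\gamma n - 2\beta)$). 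Only the $2\gamma n$ steps whose valuations are set to $\tau$ contribute materially, giving total distance $< \eta nP/L + 2\gamma n P/\tau \leq 1-2\beta$ and the stated bound. This is also why the theorem quantifies over inputs with ``at most an $\eta$ fraction of valuations exceeding $\tau$'': the far endpoint has exactly $\eta n$ players with valuation $L > \tau$, hitting that budget tightly. Your anchoring idea effectively tries to avoid paying for the high-valuation steps; the paper instead pays for them at an arbitrarily cheap rate.
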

\begin{proof}
  Fix any $\tau, \eta, \gamma > 0$ and any $\beta < \tfrac{1}{2} -
  \tfrac{P}{\tau} \gamma n$.  We prove the theorem by showing that $M$
  cannot be $([\eta + \gamma, \gamma], \beta)$-accurate.  Let $h =
  \eta n$ denote the number of players with high privacy valuation
  allowed.

  Fix any $L \geq P h / (1 - 2 \tfrac{P}{\tau} \gamma n -
  2\beta)$.  Consider the following sequence of hybrid inputs.  Let
  $x^{(1,0)} = (0^n, 0^n)$.  From $x^{(i,0)}$, define $x^{(i,1)}$ by
  flipping player $i$'s data bit from $0$ to $1$.  From $x^{(i,1)}$,
  define $x^{(i+1,0)}$ by increasing the valuation of player $i$ from
  $0$ to $L$ if $i \in [h+1]$, or from $0$ to $\tau$ if $i \in (h+1, h + 2
  \gamma n+1]$.  By induction, we have:
  \begin{eqnarray*}
    \forall i \in [h+1], & x^{(i,0)} = & (1^{i-1} 0^{n-i+1}, L^{i-1}
    0^{n-i+1}) \\
    \forall i \in (h+1, h+2\gamma n+1], & x^{(i, 0)} = & (1^{i-1}
    0^{n-i+1}, L^h \tau^{i-h-1} 0^{n-i+1}) 
  \end{eqnarray*}
  These are well-defined since $h+2\gamma = (\eta + 2\gamma) n \leq
  n$.  Define the hybrids $H^{(i,0)} = \Mout(x^{(i,0)})$.  To analyze
  these hybrids, we use the following claims.

  \begin{claim}
    \label{claim:indist}
    For any input $(b, v)$ where player $i$ is paid at most $P$, it
    holds that $(b, v)$ is not $\delta$-distinguishable for monotonic
    neighbors for player $i$ with respect to $\Mout$ for any $\delta
    \geq P / v_i$.
  \end{claim}

  \autoref{claim:indist} holds because by individual rationality, it
  holds that the privacy loss does not exceed $P$.  By the assumption
  that the privacy loss functions are growing with statistical
  distance for monotonic neighbors, it follows that $\Delta(\Mout(b,
  v), \Mout(b', v')) \leq P / v_i$ for all $(b', v')$ monotonic
  neighbors of $(b, v)$.

  \begin{claim}
    \label{claim:apobnipbn}
    $\Pay_i(x^{(i+1,0)}) \leq \Pay_i(x^{(i,1)}) \leq P$.
  \end{claim}
  As in the proof of \autoref{thm:monimp}, this claim holds because on
  input $x^{(i,1)}$, player $i+1$ has $0$ privacy valuation, and so
  $\Pay_i(x^{(i,1)}) \leq P$ by our assumption that the mechanism pays
  at most $P$ to players with $0$ privacy valuation.
  $\Pay_i(x^{(i+1,0)}) \leq \Pay_i(x^{(i,1)})$ follows as in the proof of
  \autoref{thm:monimp} from the truthfulness of the mechanism for
  privacy-indifferent players and by the fact that the privacy loss
  functions respect indifference.

  We may apply \autoref{claim:indist} to conclude that for all $i \in
  [h]$, since player $i$ has valuation $L$ in $x^{(i+1,0)}$, it holds
  that $x^{(i+1,0)}$ cannot be $(P/L)$-distinguishable for monotonic
  neighbors for player $i$.  Since $x^{(i,0)}, x^{(i+1,0)}$ are
  monotonic $i$-neighbors, it follows that $\Delta(H^{(i,0)}, H^{(i+1,0)})
  < P/L$.

  Repeating the same argument for all $i \in [h+1, h +2\gamma n]$
  and using the fact that player $i$ has valuation $\tau$ in
  $x^{(i+1,0)}$ for these $i$, it follows that $\Delta(H^{(i,0)},
  H^{(i+1,0)}) < P / \tau$.

  Combining the above using the triangle inequality and applying the
  definition of $L$, we deduce that
  \begin{equation}
    \label{eq:ahpgoiahsg}
    \Delta(H^{(1,0)}, H^{(h+ 2\gamma n+1, 0)}) < \frac{\eta n P}{L} +
    \frac{2\gamma n P}{\tau} \leq 1 - 2\beta
  \end{equation}

  For $i \in [n]$, define the open interval on the real line $A(i) = (i-1
  -(\eta + \gamma)n, i-1 + \gamma n)$.  Since the sum of the data bits
  in $x^{(i,0)}$ is exactly $i-1$, in order for $M$ to be $([\eta +
  \gamma, \gamma], \beta)$-accurate, it is necessary that
  \begin{equation}
    \label{eq:asopg}
    \Pr[H^{(i)} \in A(i)] > 1 - \beta \text {  for all } i \in [h+
    2\gamma n + 1]
  \end{equation}

  Observe that $A(1)$ and $A(h+ 2\gamma n+1)$ are disjoint.
  Therefore, \autoref{eq:ahpgoiahsg} implies that
  $$\Pr[H^{(1,0)} \in A(1)] < \Pr[H^{(h + 2 \gamma n + 1, 0)} \in A(1)] + 1 -
  2 \beta$$ %
  By \autoref{eq:asopg} it follows that $\Pr[H^{(h + 2 \gamma n  + 1, 0)}
  \in A(1)] < \beta$ \Dnote{made previous less-than strict} and
  therefore from the previous inequality we 
  deduce that $\Pr[H^{(1,0)} \in A(1)] < 1 - \beta$.  But this
  contradicts \autoref{eq:asopg}, and therefore it must be the case
  that $M$ is not $([\eta + \gamma, \gamma], \beta)$-accurate.
\end{proof}

\begin{remark}
  A different way to evaluate the accuracy guarantee of our mechanism,
  (the one taken in the work of Ghosh and Roth \cite{GR11}) would be to
  compare it to the optimal accuracy achievable in the class of all
  envy-free mechanisms with budget $B$.  However, in our context it is
  not clear how to define envy-freeness: while it is clear what it
  means for player $i$ to receive player $j$'s payment, it is not at
  all clear (without making further assumptions) how to define the
  privacy loss of player $i$ as if he were treated like player $j$,
  since this loss may depend on the functional relationship between
  the player $i$'s type and the output of the mechanism.  Because of
  this, our mechanism may not envy-free (for reasonable privacy loss
  functions), and so we refrain from using envy-free mechanisms as a
  benchmark.
\end{remark}

\section{Acknowledgements}

\noindent K.N., S.V., and D.X., were supported in part by NSF grant
CNS-1237235, a gift from Google, Inc., and a Simons Investigator grant
to Salil Vadhan.  K.N. was also supported by ISF grant (276/12).
D.X. was also supported by the French ANR Blanc program under contract
ANR-12-BS02-005 (RDAM project).

\bibliographystyle{alphasy}
\bibliography{ref}



\end{document}